\newtheorem{dfn}{Definition}[section]
\newtheorem{prop}[dfn]{Proposition}
\newtheorem{lem}[dfn]{Lemma}
\newtheorem{rem}[dfn]{Remark}
\newtheorem{ex}[dfn]{Example}
\numberwithin{equation}{section}
\begin{document}

\title{Quantum walks on simplicial complexes}

\author{Kaname Matsue\thanks{The Institute of Statistical Mathematics, Tachikawa, Tokyo, 190-8562, Japan} $^{,}$\footnote{(Corresponding author) \tt kmatsue@ism.ac.jp} ,\ Osamu Ogurisu\thanks{Division of Mathematical and Physical Sciences,
  Kanazawa University, Kanazawa, Ishikawa 920-1192, Japan} $^{,}$\footnote{\tt ogurisu@staff.kanazawa-u.ac.jp}$\ $ and Etsuo Segawa\thanks{Graduate school of Information Sciences, Tohoku University, Aoba, Sendai, 980-8579, Japan}  $^{,}$\footnote{\tt e-segawa@m.tohoku.ac.jp}}
\maketitle
\begin{abstract}
We construct a new type of quantum walks on simplicial complexes as a natural extension of the well-known Szegedy walk on graphs.
One can numerically observe that our proposing quantum walks possess linear spreading and localization
as in the case of the Grover walk on lattices.
Moreover, our numerical simulation suggests that localization of our quantum walks reflect not only topological but also geometric structures.
On the other hand, our proposing quantum walk contains an intrinsic problem concerning exhibition of nontrivial behavior, which is not seen in typical quantum walks such as Grover walks on graphs.
\end{abstract}

{\bf Keywords:} quantum walk, simplicial complexes, tethered and movable quantum walks.

\section{Introduction -- Motivation and Aim --}
\label{section-intro}
The quantum walk is a quantum analogue of classical random walks \cite{Gu}. 
Its primitive form of the discrete-time quantum walk on $\mathbb{Z}$ can be seen in Feynman's checker board \cite{FH}. 
It is mathematically shown (e.g. \cite{K1}) that this quantum walk has a completely different limiting behavior from classical random walks, which is a typical example showing a  difficulty of intuitive description of quantum walks' behavior. 
By such an interesting observation and the efficiency of quantum walks in quantum search algorithms (see \cite{A1, K2} and their references), quantum walks are studied from various kinds of viewpoints such as the quantum information \cite{A2, AKR, SKW}, approximation of physical process derived from the Dirac and Schr\"{o}dinger equations \cite{CBS, S}, experimental and industrial viewpoints \cite{KFCSAMW, MY, ZKGSBR, MW, BBW}, and so on.  

A primitive form of the Grover walk on graphs has been appeared in \cite{W}. This is considered as one of the most intensively-investigated quantum walks from the viewpoint of quantum information theory, a spectral graph theory \cite{A1, HKSS1} and a scattering theory \cite{HKSS1}. 
As a generalization of the Grover walk, the Szegedy walk is proposed to provide more abstractive discussions of applications such as quantum search algorithms \cite{Sze}. 
Very recently, the Szegedy walk is applied to quantum PageRank algorithm on large complex networks \cite{PM, PMCM}, for example.
A key feature of the Szegedy walk is that the spectrum is decomposed into two parts : the inherited part associated with underlying random walks on graphs and the birth part with some multiplicities. 
In particular, the birth part reflects the cycle structure on graphs, and it induces localization of quantum walks, namely, 
the finding probability of quantum walks remains as a positive value even in the long time limit. 
These features are also discussed in \cite{HKSS2} in the case of the Grover walk on crystal lattices.

\bigskip
The main aim of this paper is to construct a geometric multi-dimensional analogue of quantum walks on graphs.
In earlier works, quantum walks on geometric objects embedded in multi-dimensional spaces are considered as ones on {\em graphs}.
Such a traditional quantum walk can be regarded as transmission and reflection at vertices of one dimensional waves on edges of the graph.
On the other hand, if we consider an analogue of quantum walks related to multi-dimensional waves, it is natural to consider quantum walks on geometric objects which admits multi-dimensional structures like planes or surfaces. 
Once one can construct such quantum walks, it is expected to study multi-dimensional feature of walks as well as their deeper geometric aspects.

In this paper we take {\em simplicial complexes} as a natural multi-dimensional extensions of graphs.
The essence of the construction of quantum walks on simplicial complexes 
is to make an interaction of multi-dimensional waves satisfying the following postulates 1-3 in quantum mechanics \cite{NC};
\begin{enumerate}
\item  Underlying space of dynamical system is a Hilbert space $\mathcal{H}$;　
\item  The time evolution is governed by a unitary operator on $\mathcal{H}$;  
\item  There exists a collection of orthogonal projections\footnote{We restrict ourselves to orthogonal projections for the measurement for simplicity.}
 $\{ E_m \}$ with $\sum_m E_m=I$ such that if the state $\psi\in \mathcal{H}$ is measured, then the probability that the result $m$ occurs is given by
\[ Prob(m)=\|E_m \psi\|_\mathcal{H}^2. \]
\item Moreover we provide the following additional postulate (cf. No-go lemma \cite{Meyer}): \\ 
{\it the time evolution can provide a non-trivial interaction,} in particular, transmission and reflection of waves.
\end{enumerate}
In this paper, we propose a notion of the non-trivial interaction as follows: we say non-trivial interaction exhibits if and only if the time evolution $U$ is neither {\em tethered} (Definition \ref{dfn-tethered}) nor {\em non-interactive} (Definition \ref{dfn-non-interactive}, cf. No-go lemma \cite{Meyer}). 

For given $n$-dimensional simplicial complex $\mathcal{K}=\{K_k\}_{k=0}^n$, 
we propose the following walk 
which meets at least postulates 1--3 (see Definition \ref{dfn-S-QW} for details)
and we check that the postulate 4 obviously holds for the simplicial complexes treated here by numerical simulations.
\begin{enumerate}
\item {\bf Total state space }(postulate 1): 
We define a \lq\lq directed" $n$-simplex by 
  \[ \tilde{K}_n=\{ \pi[a_0 a_1\cdots a_n] \mid  |a_0a_1\cdots a_n| \in K_n, \pi\in \mathcal{S}_{n+1} \}. \]
The total state space is $\mathcal{H}=\ell^2(\tilde{K}_n)$ with the standard inner product 
(see Definition \ref{dfn-class-complex} for the detailed setting of the simplicial complex).
\item  {\bf Time evolution }(postulates 2): 
Let $d_0^{(n)}: \ell^2(\tilde{K}_{n})\to \ell^2(\tilde{K}_{n-1})$ be a coisometric operator whose explicit expression is denoted by Definition \ref{dfn-dn}, 
and also let $S: \ell^2(\tilde{K}_{n})\to \ell^2(\tilde{K}_{n})$ be a cyclic shift operator. 
Then the one-step time evolution operator is  defined by 
  \[ U=S(2{d_0^{(n)}}^*d_0^{(n)}-I), \]
where ${d_0^{(n)}}^\ast$ is the adjoint operator of $d_0^{(n)}$. 
This is a unitary operator on $\ell^2(\tilde{K}_{n})$ (Proposition \ref{unitary-C}). 
After one step unitary map $U$, an incident state in a simplex is changed to a linear combination of the states 
in the same simplex (reflection) and its adjacent simplices (transmission). See Figure 1. 
\item {\bf Finding probability }(postulate 3): 
We define the finding probability at a simplex $|\sigma|\in K_n$ with the initial state $\psi_0\in \ell^2(\tilde{K}_{n})$ after $m$-th iteration is 
  \[ \mu_m^{\psi_0} (|\sigma|)= \sum_{\pi\in \mathcal{S}_{n+1}}  (U^m\psi_0)(\pi\sigma) ^2_{\tilde{K}_{n}}.  \]
\end{enumerate}
Remark that, when $n=1$, the original Szegedy walk on a graph \cite{Sze, HKSS2} is reconstructed. 
In that sense, this walk is a natural generalization of the Szegedy walk. 
In the above settings, we run this walk on several kinds of simplicial complexes by numerical simulations in this paper.  
Our numerical simulation results provide the following interesting observations.

\par
Quantum walks on simplicial complexes reflect topology of simplicial complexes, as in the case of lattices discussed in \cite{HKSS2}. Moreover, multi-dimensional geometric features yield richer aspects of our quantum walks.
More precisely, we numerically obtain the following observations:
\begin{itemize}
\item Our quantum walks admit linear spreading and localizations, which are similar behaviors to traditional quantum walks on graphs (Figures \ref{fig-spread} and \ref{fig-time-average}). 
\item Topological cycle structure brings about localization of quantum walks (Figures \ref{fig-Swalk-cylinder} and \ref{fig-time-average}). 
\item Higher dimensional topological cycles such as cavities (Figure \ref{fig-tetrahedron}) can \lq\lq absorb" localized states on cycles (Figures \ref{fig-tetrahedron}, \ref{fig-Swalk-cylinder} and \ref{fig-time-average}). 
\item Orientability of simplicial complexes also affects the behavior of quantum walks (Figures \ref{fig-Mobius} and \ref{fig-time-average}).
\end{itemize}

\bigskip
This paper is organized as follows. 
In Section \ref{section-construction}, we construct a unitary operator on simplicial complexes which is a natural extension of quantum walks on graphs. 
In Section \ref{section-numerical}, we numerically study quantum walks constructed in Section \ref{section-construction}. 
In particular, we study our quantum walks on the following geometric objects:
\begin{itemize}
\item $\mathbb{R}^2$: two dimensional Euclidian space corresponding to the simplicial complex $\mathcal{K}_0$;
\item $S^1\times \mathbb{R}$: a infinite cylinder corresponding to the simplicial complex $\mathcal{K}_1$;
\item an infinite cylinder with a tetrahedron with cavity corresponding to the simplicial complex $\mathcal{K}_2$;
\item the M\"{o}bius band corresponding to the simplicial complex $\mathcal{K}_3$.
\end{itemize}
We observe the similar and different properties compared with the traditional quantum walks on graphs as stated in the above.
Concrete implementations of our quantum walks and numerical simulation results are shown in Appendix. 

\section{Quantum walks on simplicial complexes}
\label{section-construction}

Let $\mathcal{K}$ be an $n$-dimensional simplicial complex and $K_k = K_k(\mathcal{K})$ be a collection of $k$-simplices 
which belongs to $\mathcal{K}$. 
Throughout this paper we consider the following class of simplicial complexes.
\begin{dfn}\rm
\label{dfn-class-complex}
We shall call an $n$-dimensional simplicial complex $\mathcal{K} = \{K_k\}_{k=0}^n$ {\em admissible} if the following conditions hold:
\begin{itemize}
\item $\mathcal{K}$ is strongly connected. See also Definition \ref{dfn-strong-conn};
\item For each $k=0,\cdots, n-1$, every $|\tau| \in K_k$ is a primary face of some $|\sigma| \in K_{k+1}$. We do not assume that such a $(k+1)$-simplex $|\sigma|$ is uniquely determined. 
Moreover, assume that there is a positive integer $M$ such that
\begin{equation*}
\sharp\{|\sigma| \in K_{k+1}\mid |\sigma| \text{ admits }|\tau| \text{ as a face}\} \leq M < \infty
\end{equation*}
\end{itemize}
holds for all $|\tau| \in K_k$ and for each $k=0,\cdots, n-1$.
\end{dfn}
In this paper, we define the set of \lq\lq directed" $k$-simplices $\tilde K_k\ (k=0,\cdots, n-1)$  associated with $K_k$ by the following, which is important to construct our quantum walks.
\begin{equation*}
\tilde K_k:= \{\pi \sigma:= \pi [a_0 a_1\cdots a_k] \mid |\sigma| = |a_0 a_1 \cdots a_k|\in K_k,\ \pi \in \mathcal{S}_{k+1}\},
\end{equation*}
where $\mathcal{S}_{k+1}$ is the $(k+1)$-dimensional permutation group. For example, a $2$-simplex $|abc| \in K_2$ generates six different elements in $\tilde K_2$: $[abc]$, $[bca]$, $[cab]$, $[acb]$, $[cba]$ and $[bac]$.
Whenever we use these notations, we distinguish $[abc]$ from $[bca]$, $[acb]$ and so on. 
On the other hand, we identify $|abc|$ with $|bca|$, $|abc|$, etc., which can be regarded as {\em the support} of directed simplices.

\bigskip
Now we define quantum walks on $\mathcal{K}$. Firstly, define a $\mathbb{C}$-linear space $\ell^2(\tilde K_k)$ by 
\begin{equation*}
\ell^2(\tilde K_k):= \left\{f : \tilde K_k\to \mathbb{C} \mid \| f \|_{\tilde K_k} < \infty\right\}.
\end{equation*}
Here the inner product is given by the standard inner product, that is,
\begin{equation*}
\langle f, g \rangle_{\tilde K_k}:= \sum_{\sigma \in \tilde K_k}\overline{f(\sigma)}g(\sigma).
\end{equation*}
Let $\|\cdot \|_{\tilde K_k}$ be the associated norm, namely, $\|f\|_{\tilde K_k} := \langle f, f \rangle_{\tilde K_k}^{1/2}$.
We take
\begin{equation*}
\delta^{(k)}_{\sigma}(\sigma'):= \begin{cases}
	1 & \text{ if $\sigma' = \sigma$}\\	
	0 & \text{ if $\sigma' \not = \sigma$}
\end{cases}
\end{equation*}
as the standard basis of $\ell^2(\tilde K_k)$. One knows that the $\mathbb{C}$-linear space $\ell^2(\tilde K_k)$ associated with the inner product $\langle \cdot, \cdot \rangle_{\tilde K_k}$ is a Hilbert space.

\bigskip
Fix a permutation on $(n+1)$-words $\pi \in \mathcal{S}_{n+1}$ whose order is $n+1$. That is,
\begin{equation*}
\pi^1 \not = \pi^2 \not = \cdots \not = \pi^{n+1} = I.
\end{equation*}
As an example, we choose 
\begin{equation}
\label{sample-permutation}
\pi [a_0 a_1 \cdots a_{n-1} a_n] = [a_1 a_2 \cdots a_n a_0].
\end{equation}
Throughout this paper, we only consider the permutation (\ref{sample-permutation}) for simplicity.
We associate this permutation with the $n$-th total state space $\ell^2(\tilde K_n)$ by
\begin{equation*}
S_\pi \delta^{(n)}_{\sigma}:= \delta^{(n)}_{\pi \sigma},\quad \sigma \in \tilde K_n.
\end{equation*}

\begin{dfn}[Total state space]\rm
We shall say the Hilbert space $(\ell^2(\tilde K_n), \langle \cdot, \cdot \rangle_{\tilde K_n})$ {\em the total state space} of the simplicial quantum walk on $\mathcal{K}$ defined below, where $n = \dim \mathcal{K}$.
\end{dfn}

Secondly, define a linear operator on the total state space and its adjoint operator.
\begin{dfn}\rm
\label{dfn-dn}
Let $i\in \{0,1,\cdots, n\}$. Define a map $\tilde d^{(n)}_i: \tilde K_n\to \tilde K_{n-1}$ by the following:
\begin{equation*}
\tilde d^{(n)}_i [a_0a_1\cdots a_n]:= [a_{i+1}a_{i+2}\cdots a_n a_0 a_1 \cdots a_{i-1}].
\end{equation*}
Also, define $d^{(n)}_i:\ell^2({\tilde K_n}) \to \ell^2({\tilde K_{n-1}})$ by the $\mathbb{C}$-linear extension of the following expression: 
\begin{equation*}
d^{(n)}_i \delta^{(n)}_{\sigma} = \overline{w(\pi^i\sigma)}\delta_{\tilde d^{(n)}_i\sigma}^{(n-1)},
\end{equation*}
where $w(\sigma) \in \mathbb{C}$ is a constant depending on $\sigma\in \tilde K_n$, 
and $\overline{w}$ is the complex conjugate of $w\in \mathbb{C}$. 
\end{dfn}

For example, consider the case where $n=2$ and $\sigma = [abc]$.
We then have
\begin{equation*}
\tilde d^{(2)}_0([abc]) = [bc],\quad \tilde d^{(2)}_1([abc]) = [ca],\quad \tilde d^{(2)}_2([abc]) = [ab],
\end{equation*}
which yield
\begin{equation*}
d^{(2)}_0\delta^{(2)}_{[abc]} = \overline{w([abc])} \delta^{(1)}_{[bc]},\quad d^{(2)}_1\delta^{(2)}_{[abc]} = \overline{w([bca])} \delta^{(1)}_{[ca]},\quad d^{(2)}_0\delta^{(2)}_{[abc]} = \overline{w([cab])} \delta^{(1)}_{[ab]}.
\end{equation*}
One easily sees that $d^{(n)}_i = d^{(n)}_0 \circ S_\pi^j$ holds for $i=0, \cdots, n$. 

\begin{dfn}\rm
Define a linear operator ${d^{(n)}_i}^\ast: \ell^2(\tilde K_{n-1})\to \ell^2(\tilde K_n)$ by the $\mathbb{C}$-linear extension of the following formulation:
\begin{equation*}
{d^{(n)}_i}^\ast \delta_{\tau}^{(n-1)}:= \sum_{\sigma: \tilde d^{(n)}_i\sigma = \tau}w(\pi^i\sigma)\delta_{\sigma}^{(n)}.
\end{equation*}
\end{dfn}

\begin{rem}\rm
Easy calculations yield
\begin{equation*}
S_\pi {d^{(n)}_1}^\ast = {d^{(n)}_0}^\ast,\quad S_\pi {d^{(n)}_2}^\ast = {d^{(n)}_1}^\ast,\quad \cdots,\quad S_\pi {d^{(n)}_n}^\ast = {d^{(n)}_{n-1}}^\ast,\quad S_\pi {d^{(n)}_0}^\ast = {d^{(n)}_n}^\ast,
\end{equation*}
since the unitary transpose $S_\pi^\ast$ of $S_\pi$ coincides with $S_\pi^{-1}$.
\end{rem}
The linear operator ${d^{(n)}_i}^\ast$ is indeed the adjoint operator of $d^{(n)}_i$ in the following sense.

\begin{lem}
\label{lem-adjoint}
Assume that the function $w: \tilde K_n \to \mathbb{C}$ in the definition of $d_i$ is bounded. 
Then, for each $i=0,\cdots, n$, the following equality holds: 
\begin{equation*}
\langle d^{(n)}_i \psi,\phi \rangle_{\tilde K_{n-1}} = \langle \psi,{d^{(n)}_i}^\ast \phi \rangle_{\tilde K_n},\quad \forall \psi\in \ell^2(\tilde K_n),\ \phi \in \ell^2(\tilde K_{n-1}).
\end{equation*}
\end{lem}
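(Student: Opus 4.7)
The plan is to verify the adjoint identity on the standard basis vectors $\delta^{(n)}_\sigma$ and $\delta^{(n-1)}_\tau$ and then extend by sesquilinearity and continuity. The computation is a bookkeeping check, but one must first justify that both sides are well-defined bounded sesquilinear forms on $\ell^2(\tilde K_n)\times \ell^2(\tilde K_{n-1})$.

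First I would argue that $d^{(n)}_i$ and ${d^{(n)}_i}^*$ extend to genuine bounded operators on $\ell^2$. For $d^{(n)}_i$ this is immediate: on the basis it sends $\delta^{(n)}_\sigma$ to a scalar multiple (of modulus $|w(\pi^i\sigma)|\le \|w\|_\infty$) of a single basis vector, so it has operator norm at most $\|w\|_\infty$. For ${d^{(n)}_i}^*$, the admissibility condition of Definition \ref{dfn-class-complex} is crucial: the set $\{\sigma\in\tilde K_n : \tilde d^{(n)}_i\sigma=\tau\}$ is finite (its cardinality is controlled by $M\cdot(n+1)!$-type factors coming from the number of $(n)$-simplices admitting a given $(n-1)$-face and the choice of ordering), so the defining sum for ${d^{(n)}_i}^*\delta^{(n-1)}_\tau$ is a finite linear combination of basis vectors with coefficients bounded by $\|w\|_\infty$, hence lies in $\ell^2(\tilde K_n)$.

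Next I would check the adjoint identity on basis vectors. Using $\langle f,g\rangle=\sum\overline{f(\cdot)}g(\cdot)$ and the real-valued (hence self-conjugate) delta functions, one computes
\begin{equation*}
\langle d^{(n)}_i\delta^{(n)}_\sigma,\delta^{(n-1)}_\tau\rangle_{\tilde K_{n-1}}
= \bigl\langle \overline{w(\pi^i\sigma)}\,\delta^{(n-1)}_{\tilde d^{(n)}_i\sigma},\delta^{(n-1)}_\tau\bigr\rangle_{\tilde K_{n-1}}
= w(\pi^i\sigma)\,\mathbf{1}\{\tilde d^{(n)}_i\sigma=\tau\},
\end{equation*}
while
\begin{equation*}
\langle \delta^{(n)}_\sigma,{d^{(n)}_i}^*\delta^{(n-1)}_\tau\rangle_{\tilde K_n}
= \Bigl\langle \delta^{(n)}_\sigma,\sum_{\sigma':\tilde d^{(n)}_i\sigma'=\tau}w(\pi^i\sigma')\delta^{(n)}_{\sigma'}\Bigr\rangle_{\tilde K_n}
= w(\pi^i\sigma)\,\mathbf{1}\{\tilde d^{(n)}_i\sigma=\tau\},
\end{equation*}
so the two sides agree.

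Finally, I would extend the identity to arbitrary $\psi\in\ell^2(\tilde K_n)$ and $\phi\in\ell^2(\tilde K_{n-1})$ by sesquilinearity on the dense subspace of finitely supported functions and then by continuity of both inner products, which is legitimate because both sides define bounded sesquilinear forms by the first step. The only real obstacle is the boundedness of ${d^{(n)}_i}^*$, which is where the admissibility hypothesis and the assumed boundedness of $w$ are genuinely used; everything else reduces to matching Kronecker-delta evaluations.
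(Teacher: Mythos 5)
Your overall strategy—verify the identity on the standard basis and extend by sesquilinearity and continuity—is essentially the same computation the paper performs, just organized differently: the paper expands arbitrary $\psi$ and $\phi$ in the basis and manipulates the resulting double sums directly, reducing both sides to $\sum_{\sigma}\overline{\psi_\sigma}\,w(\pi^i\sigma)\,\phi_{\tilde d^{(n)}_i\sigma}$, whereas you check the Kronecker-delta identity first and then invoke density. Your basis computation is correct, and your observation that admissibility is what makes ${d^{(n)}_i}^\ast\delta^{(n-1)}_\tau$ a finite (hence $\ell^2$) sum matches the paper's use of the same hypothesis.

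There is, however, one concrete flaw in your boundedness argument for $d^{(n)}_i$ itself. The claim that ``it sends $\delta^{(n)}_\sigma$ to a scalar multiple of a single basis vector, so it has operator norm at most $\|w\|_\infty$'' is not valid, because $\tilde d^{(n)}_i$ is \emph{not injective}: distinct $\sigma$ with $\tilde d^{(n)}_i\sigma=\tau$ are all sent to multiples of the \emph{same} basis vector $\delta^{(n-1)}_\tau$, so the images of orthonormal vectors are not orthogonal and the coefficients can add constructively. (An operator sending every basis vector to $\delta_0$ is unbounded, despite satisfying your stated criterion.) The correct estimate is
\begin{equation*}
\|d^{(n)}_i\psi\|^2_{\tilde K_{n-1}}=\sum_{\tau}\Bigl|\sum_{\sigma:\tilde d^{(n)}_i\sigma=\tau}\overline{w(\pi^i\sigma)}\psi_\sigma\Bigr|^2\le M'\,\|w\|_\infty^2\,\|\psi\|^2_{\tilde K_n},
\end{equation*}
where $M'$ is a uniform bound on $\sharp\{\sigma:\tilde d^{(n)}_i\sigma=\tau\}$, which exists precisely by the admissibility condition of Definition \ref{dfn-class-complex}. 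So the same hypothesis you invoke for ${d^{(n)}_i}^\ast$ is also needed for $d^{(n)}_i$; once you insert it, the boundedness of both operators holds, your continuity argument goes through, and the proof is complete.
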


\begin{proof}
Let 
$\psi = \sum_{\sigma\in \tilde K_n}\psi_{\sigma} \delta_{\sigma}^{(n)} \in \ell^2(\tilde K_n)$ 
and $\phi = \sum_{\tau\in \tilde K_{n-1}}\phi_{\tau} \delta_{\tau}^{(n-1)} \in \ell^2(\tilde K_{n-1})$, where $\psi_{\sigma},\ \phi_{\tau} \in \mathbb{C}$.
Direct computations yield
\begin{align*}
\langle \psi,{d^{(n)}_i}^\ast \phi \rangle_{\tilde K_n} &= \sum_{\sigma \in \tilde K_n} \overline{\psi(\sigma)} ({d^{(n)}_i}^\ast \phi) (\sigma)\\
	&= \sum_{\sigma \in \tilde K_n}\overline{\psi(\sigma)}
	\left( \sum_{\tau \in \tilde K_{n-1}} \phi_{\tau} ({d^{(n)}_i}^\ast \delta_{\tau}^{(n-1)})(\sigma) \right)\\
	&= \sum_{\sigma \in \tilde K_n}\overline{\psi(\sigma)}
	\left( \sum_{\tau \in \tilde K_{n-1}} \phi_{\tau} \left(\sum_{\sigma' \in \tilde K_n: \tilde d^{(n)}_i {\sigma'} = \tau}w(\pi^i \sigma') \delta_{\sigma'}^{(n)}(\sigma) \right) \right)\\
	&= \sum_{\sigma \in \tilde K_n}\overline{\psi(\sigma)}
	\left( \phi_{\tilde d^{(n)}_i\sigma} w^{(n)}(\pi^i\sigma) \delta_{\sigma}^{(n)}(\sigma) \right)\\
	&= \sum_{\sigma \in \tilde K_n}\overline{\psi_{\sigma}} w(\pi^i\sigma) \phi_{\tilde d^{(n)}_i \sigma}.
\end{align*}
Similarly,
\begin{align*}
\langle d^{(n)}_i \psi,\phi\rangle_{\tilde K_{n-1}} &= \sum_{\tau \in \tilde K_{n-1}}\left( \sum_{\sigma\in \tilde K_n}w(\pi^i\sigma) \overline{\psi_{\sigma}} \delta_{\tilde d^{(n)}_i\sigma}^{(n-1)}(\tau) \right)\left(\sum_{\tau' \in \tilde K_{n-1}}\phi_{\tau'} \delta_{\tau'}^{(n-1)}(\tau)\right)\\
	&= \sum_{\tau \in \tilde K_{n-1}}\left( \sum_{\sigma\in \tilde K_n}w(\pi^i\sigma) \overline{\psi_{\sigma}} \delta_{\tilde d^{(n)}_i \sigma}^{(n-1)}(\tau)\right)\phi_{\tau}.
\end{align*}
Now the summand $\sum_{\sigma\in \tilde K_n}w(\pi^i\sigma) \overline{\psi_{\sigma}} \delta_{\tilde d^{(n)}_i \sigma}^{(n-1)}(\tau)$ is actually $\sum_{\sigma : \tilde d_i^{(n)}\sigma = \tau }w(\sigma) \overline{\psi_\sigma}$, which is a finite sum since $\mathcal{K}$ is admissible.
Since each $|\tau|\in K_{n-1}$ has a coface $|\sigma| \in K_n$, then the sum $\sum_{\tau \in \tilde K_{n-1}} \sum_{\sigma : \tilde d_i^{(n)}\sigma = \tau }$ is exactly the same as $\sum_{\sigma \in \tilde K_n}$. Therefore
\begin{align*}
\langle d^{(n)}_i \psi,\phi\rangle_{\tilde K_{n-1}}
		&= \sum_{\sigma \in \tilde K_n}\overline{\psi_{\sigma}} w(\pi^i\sigma) \phi_{\tilde d^{(n)}_i \sigma}.
\end{align*}
Finally, calculations in both $\langle \psi,{d^{(n)}_i}^\ast \phi \rangle_{\tilde K_n}$ and $\langle d^{(n)}_i \psi,\phi\rangle_{\tilde K_{n-1}}$ make sense since $w$ is bounded. As a result, the proof is completed.
\end{proof}

\begin{dfn}\rm
We shall say the function $w: \tilde K_n \to \mathbb{C}$ {\em a weight on $\mathcal{K}$} if $w(\sigma)\not = 0$ for all $\sigma \in \tilde K_n$ and
\begin{equation*}
\sum_{\sigma \in \tilde K_n: \tilde d^{(n)}_0 \sigma=\tau} |w(\pi^i\sigma)|^2 = 1
\end{equation*}
holds for all $\tau \in \tilde K_{n-1}$.
\end{dfn}
The second statement of the weight is equivalent to
\begin{equation*}
\sum_{\sigma \in \tilde K_n: \tilde d^{(n)}_i \sigma =\tau}|w(\sigma)|^2 = 1,\quad \forall \tau \in \tilde K_{n-1},\quad i=0,\cdots, n.
\end{equation*}
By the definition, a weight $w$ on an admissible simplicial complex $\mathcal{K}$ satisfies the assumption of Lemma \ref{lem-adjoint}. 
Using the notion of weights, we obtain the following proposition, which is the center for constructing our quantum walk.

\begin{prop}
\label{unitary-C}
Let $\mathcal{K}$ be an admissible $n$-dimensional simplicial complex which admits a weight $w$. Then, for $i=0,\cdots, n$, the linear operator $C_i:= 2{d^{(n)}_i}^\ast d^{(n)}_i - I$ is a unitary operator on $\ell^2(\tilde K_n)$.
\end{prop}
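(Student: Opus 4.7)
The plan is to exploit the standard fact that an operator of the form $2P - I$ is unitary whenever $P$ is a self-adjoint projection, and to show that $P = {d^{(n)}_i}^\ast d^{(n)}_i$ is exactly such a projection on $\ell^2(\tilde K_n)$.

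First, I would observe that $C_i$ is automatically self-adjoint: by Lemma \ref{lem-adjoint}, ${d^{(n)}_i}^\ast$ is the Hilbert space adjoint of $d^{(n)}_i$ (the weight hypothesis gives boundedness), so $({d^{(n)}_i}^\ast d^{(n)}_i)^\ast = {d^{(n)}_i}^\ast d^{(n)}_i$ and hence $C_i^\ast = C_i$. Consequently, to establish unitarity $C_i^\ast C_i = C_i C_i^\ast = I$, it suffices to verify that $C_i^2 = I$. Expanding,
\begin{equation*}
C_i^2 = (2{d^{(n)}_i}^\ast d^{(n)}_i - I)^2 = 4({d^{(n)}_i}^\ast d^{(n)}_i)^2 - 4{d^{(n)}_i}^\ast d^{(n)}_i + I,
\end{equation*}
so the identity $C_i^2 = I$ reduces to showing that ${d^{(n)}_i}^\ast d^{(n)}_i$ is idempotent. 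This in turn is guaranteed if $d^{(n)}_i {d^{(n)}_i}^\ast = I$ on $\ell^2(\tilde K_{n-1})$, i.e. that $d^{(n)}_i$ is a coisometry.

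The main computational step is therefore to verify this coisometry property on the standard basis. For $\tau \in \tilde K_{n-1}$, the definition of ${d^{(n)}_i}^\ast$ gives
\begin{equation*}
d^{(n)}_i {d^{(n)}_i}^\ast \delta^{(n-1)}_\tau = \sum_{\sigma:\, \tilde d^{(n)}_i \sigma = \tau} w(\pi^i\sigma)\, d^{(n)}_i \delta^{(n)}_\sigma = \sum_{\sigma:\, \tilde d^{(n)}_i \sigma = \tau} |w(\pi^i\sigma)|^2\, \delta^{(n-1)}_\tau,
\end{equation*}
which equals $\delta^{(n-1)}_\tau$ precisely because of the defining property of a weight. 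This computation is the crux of the proof; the only subtlety is keeping track of the fact that the sum appearing on the right is finite (admissibility of $\mathcal{K}$ gives the uniform bound $M$), which legitimizes the algebraic manipulation. Combined with the first step this yields $({d^{(n)}_i}^\ast d^{(n)}_i)^2 = {d^{(n)}_i}^\ast d^{(n)}_i$, hence $C_i^2 = I$, and together with $C_i^\ast = C_i$ this gives $C_i^\ast C_i = C_i C_i^\ast = I$, completing the proof.

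I do not expect any serious obstacle here: the computation is elementary once the weight normalization is recognized as exactly the content of the coisometry condition. The only thing to be careful about is that the sums are literally finite (so no convergence issues arise), which is why the admissibility condition and the nonvanishing/normalization assumptions in the definition of a weight were imposed in the first place.
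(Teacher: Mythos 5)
Your proof is correct and takes essentially the same route as the paper's: both reduce unitarity to $C_i^2 = I$ via self-adjointness of $C_i$ and then verify the coisometry identity $d^{(n)}_i {d^{(n)}_i}^\ast = I$ on the standard basis of $\ell^2(\tilde K_{n-1})$ using the weight normalization. The only cosmetic difference is that the paper factors $C_i^2$ as $4{d^{(n)}_i}^\ast(d^{(n)}_i {d^{(n)}_i}^\ast - I)d^{(n)}_i + I$ instead of passing through idempotency of ${d^{(n)}_i}^\ast d^{(n)}_i$, which is the same algebra.
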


\begin{proof}
Admissibility of $\mathcal{K}$ and the existence of a weight $w$ yield that both $d_i^{(n)}$ and ${d_i^{(n)}}^\ast$ are bounded and linear.
Since $C_i^\ast = (2{d^{(n)}_i}^\ast d^{(n)}_i - I)^{\ast} = C_i$, it is sufficient to prove $C_i^2 =I$.
One knows 
\begin{align*}
C_i^2 &= (2{d^{(n)}_i}^\ast d^{(n)}_i - I)^2 = 4{d^{(n)}_i}^\ast(d^{(n)}_i {d^{(n)}_i}^\ast - I) d^{(n)}_i + I.
\end{align*}
Now we consider the linear operator $d_i d_i^\ast$. 
By the definition, we obtain 
\begin{align*}
d^{(n)}_i {d^{(n)}_i}^\ast \delta_{\tau}^{(n-1)} 
&= d^{(n)}_i\left(\sum_{\sigma \in \tilde K_n: \tilde d^{(n)}_i \sigma = \tau} w(\pi^i\sigma) \delta_{\sigma}^{(n)}\right)\\
& = \sum_{\sigma \in \tilde K_n: \tilde d^{(n)}_i \sigma = \tau} \overline{w(\pi^i\sigma)} \cdot w(\pi^i\sigma) \delta_{\tilde d^{(n)}_i \sigma}^{(n-1)}.
\end{align*}
The function $\delta_{\tilde d^{(n)}_i \sigma}^{(n-1)}$ in the sum of the right-hand side is equal to $\delta_{\tau}^{(n-1)}$. 
Since $w$ is a weight on $\mathcal{K}$, then we obtain
\begin{equation*}
\sum_{\sigma \in \tilde K_n: \tilde d_i \sigma  = \tau} \overline{w(\pi^i\sigma)} \cdot w(\pi^i\sigma) \delta_{\tau}^{(n-1)} = \delta_{\tau}^{(n-1)}.
\end{equation*}
Consequently, $d^{(n)}_i {d^{(n)}_i}^\ast \delta_{\tau}^{(n-1)}  = \delta_{\tau}^{(n-1)} $ holds. 
Since $\tau \in \tilde K_{n-1}$ is arbitrary, then $d^{(n)}_i {d^{(n)}_i}^\ast = I$ holds. 
The equality $C_i^2 = I$ thus holds and the proof is completed.
\end{proof}

\begin{dfn}[Simplicial quantum walks]\rm
\label{dfn-S-QW}
Let $\mathcal{K}$ be an admissible $n$-dimensional simplicial complex which admits a weight $w$. 
We shall say $C_i$ defined in Proposition \ref{unitary-C} a {\em (local) quantum coin on $\mathcal{K}$} and an operator $S_\pi$ a {\em shift operator} associated with a permutation $\pi\in \mathcal{S}_{n+1}$. 
An element $f\in \ell^2(\tilde K_n)$ with $\|f\|_{\tilde K_n} = 1$ is called {\em a state}.
For the unitary operator $U:= S_\pi C_0$ and a state $f\in \ell^2(\tilde K_n)$, define the (finding) {\em probability at time $m$ at $|\sigma| \in K_n$ with the initial state $f$} by
\begin{equation*}
\mu_m^{(f)} (|\sigma|) := \sum_{\tau : |\tau|=|\sigma| }\|U^m f(\tau )\|_{\tilde K_n}^2.
\end{equation*}

Finally, we call the triple $(U, \ell^2(\tilde K_n), \{\mu_m\}_{m\geq 0})$, or simply $U$, the {\em simplicial quantum walk}, or {\em S-quantum walk} for short, {\em on $\mathcal{K}$}.
\end{dfn}

Consider, as an example, the case $n=2$. The S-quantum walk $U$ scatters a state on a $2$-simplex $|\sigma|$ to the linear combination of a rotating state on the same $2$-simplex $|\sigma|$ and transmitted states to simplices adjacent to $|\sigma|$. 
The resulting state can be considered as the collection of {\em the reflection (the former) and the transmission (the latter) of two dimensional waves}. 
The illustration of the S-quantum walk on $\mathcal{K}$ is shown in Figure \ref{fig-Swalk}. 
It is a generalization of phenomena in quantum walks on graphs, as shown in Figure \ref{fig-1dimwalk}.

Note that 
$\sum_{|\sigma| \in K_n} \mu_m^{(f)}(|\sigma|)=1$ always holds for any state $f$ and all $m$, since $\|f\|_{\tilde K_n} = 1$ and $U$ is unitary. 

\begin{figure}[htbp]\em
\begin{minipage}{0.32\hsize}
\centering
\includegraphics[width=4.0cm]{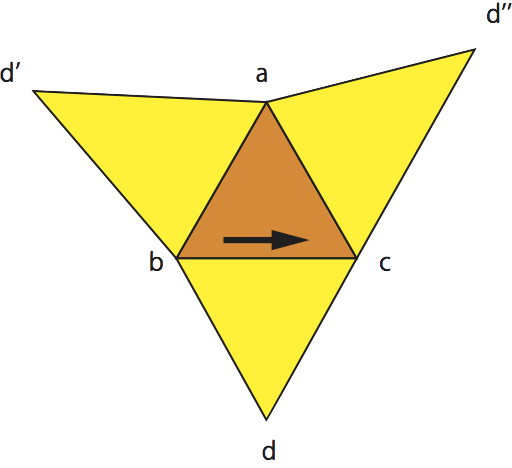}
(a)
\end{minipage}
\begin{minipage}{0.32\hsize}
\centering
\includegraphics[width=4.0cm]{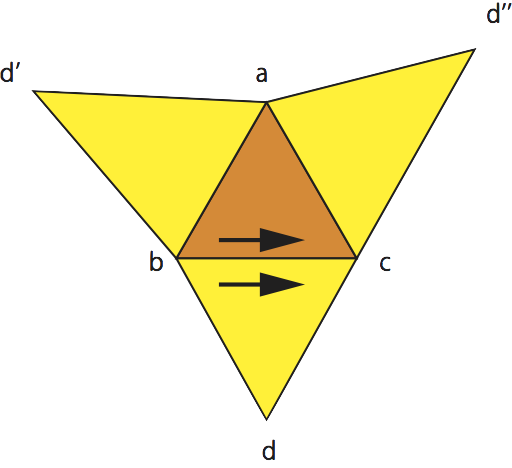}
(b)
\end{minipage}
\begin{minipage}{0.32\hsize}
\centering
\includegraphics[width=4.0cm]{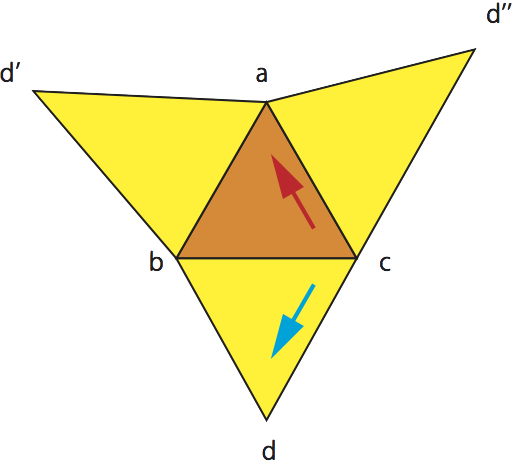}
(c)
\end{minipage}
\caption{S-quantum walk $U$.}
\label{fig-Swalk}
(a). The initial state $\delta^{(2)}_{[abc]}$.

(b). The quantum coin $C_0$. This figure shows the state $C_0\delta^{(2)}_{[abc]} = (2|w([abc])|^2-1)\delta^{(2)}_{[abc]} + 2\overline{w([abc])}w([dbc])\delta^{(2)}_{[dbc]}$.

(c). The S-quantum walk $U=S_\pi C_0$. This figure shows the state $U\delta^{(2)}_{[abc]} = (2|w([abc])|^2-1)\delta^{(2)}_{[bca]} + 2\overline{w([abc])}w([dbc])\delta^{(2)}_{[bcd]}$. The state on $|\sigma| = |abc|$ drawn by the red arrow represents the \lq\lq reflection" of the state on $|\sigma|$. Similarly, the state on $|bcd|$ drawn by the blue arrow represents the \lq\lq transmission" of the state on $|\sigma|$ to $|bcd|$.
\end{figure}

\begin{figure}[htbp]\em
\begin{minipage}{0.5\hsize}
\centering
\includegraphics[width=4.0cm]{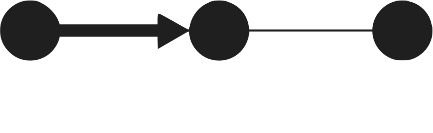}
(a)
\end{minipage}
\begin{minipage}{0.5\hsize}
\centering
\includegraphics[width=4.0cm]{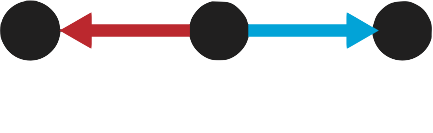}
(b)
\end{minipage}
\caption{Quantum walk on graph.}
\label{fig-1dimwalk}
(a). Initial state of $1$-dimensional quantum walk on an edge.

(b). Behavior of quantum walk on graph. Initial state (a) is mapped to the state via reflection (red) and transmission (blue). S-quantum walk on $\mathcal{K}$ described in Figure \ref{fig-Swalk} can be considered as the multi-dimensional generalization of this behavior.
\end{figure}

\begin{rem}\rm
Even if $\tau \in K_{n-1}$ is an $(n-1)$-simplex whose coface is only one $n$-simplex $\sigma$ and $\tilde d_i \sigma =\tau$, 
the identification of $C_i\delta_{\sigma}^{(n)}= \delta_{\sigma}^{(n)}$ also makes sense.
In this case, we can interpret that the S-quantum walk $U$ only represents the reflection of waves without transmission.
\end{rem}

\begin{rem}\rm
If $n=1$, 
the S-quantum walk on $\mathcal{K}$ is nothing but the Szegedy walk \cite{Sze}. Moreover, if we assume that $w(e) = 1/\sqrt{\deg(\tilde d_1^{(1)}e)}$ for all $e\in \tilde K_1$, 
then the corresponding S-quantum walk on $\mathcal{K}$ becomes the Grover walk on the graph $\mathcal{K}$.
\end{rem}

\begin{rem}\rm
For a given $n$-dimensional admissible simplicial complex $\mathcal{K}$, we can define $d^{(k)}_i$ and ${d^{(k)}_i}^\ast$ in the same manner for the $k$-dimensional skeleton $\mathcal{K}^{(k)} = \{K_j\}_{j=0}^k$ of $\mathcal{K}$ for each $k=0,\cdots, n-1$.
\end{rem}

We go back to construction of simplicial quantum walks.
One can consider a permutation $\pi\in \mathcal{S}_{n+1}$ whose order is less than $n+1$ to define S-quantum walks. 
In this case, however, quantum walks can be trivial in the following sense.

\begin{ex}\rm
\label{ex-tether}
Let $n=2$ and $\mathcal{K}$ be the simplicial complex drawn in Figure \ref{fig-Swalk}. 
Also, let $\pi'\in \mathcal{S}_{3}$ be
\begin{equation*}
\pi' [abc] = [acb].
\end{equation*}
This permutation $\pi'$ defines an S-quantum walk $U'$ on $\mathcal{K}$ with a weight $w$. 
Consider time evolution of the state $\delta_{[abc]}^{(2)}$ under $U'$. One easily knows that the state $\delta_{[abc]}$ does not transmit states on $|abd'|$ and $|abd''|$ {\em for arbitrary weights}. In other words, states interact only between $|abc|$ and $|dbc|$ and we can observe no motions on quantum walks. 
\end{ex}

This example gives us the following problem : {\em which permutation $\pi \in \mathcal{S}_{n+1}$ associates S-quantum walk  exhibiting nontrivial behavior?} 
We propose the following concept to formulate this problems mathematically.
\begin{dfn}[Tethered quantum walks]\rm
\label{dfn-tethered}
For $\sigma = [a_0 a_1 \cdots a_n]\in \tilde K_n$, let $\{\sigma\} := \{a_0,a_1,\cdots, a_n\}$. We say that a permutation $\pi\in \mathcal{S}_{n+1}$ {\it induces tethered (S-)quantum walks} on $\mathcal{K}$ if the following statement holds: for arbitrary weights on $\mathcal{K}$ and associated quantum walks $U=S_\pi C_0$, there exists a vertex $a_j \in \{\sigma\}$ such that the following two statements are equivalent for all $m\in \mathbb{N}$:
\begin{enumerate}
\item $\tau \in \tilde K_n$ satisfies $\langle \delta^{(n)}_\tau, U^m \delta^{(n)}_\sigma \rangle_{\tilde K_n}\not = 0$.
\item $a_j\in \{\tau\}$.
\end{enumerate}
Conversely, we say that a permutation $\pi\in \mathcal{S}_{n+1}$ {\it induces movable S-quantum walks} on $\mathcal{K}$ if $\pi$ does not induce tethered quantum walks.
\end{dfn}

We easily know that the permutation $\pi' : [abc] \mapsto [acb]$ induces tethered S-quantum walks on the complex in Example \ref{ex-tether}. 
In Example \ref{ex-tether}, $a_j$ in Definition \ref{dfn-tethered} corresponds to vertices $\{b\}$ and $\{c\}$.
The walk $U_{\pi'}$ is always \lq\lq tethered" around the starting position and never get far away from there.
For these reasons, we apply $\pi \in \mathcal{S}_3$ given by (\ref{sample-permutation}) in the next section. 
By the definition of $U_\pi$, we expect that this $\pi$ induces movable quantum walks. 
As long as we observe numerical simulation results below, this permutation induces movable quantum walks. 
Our problem is then translated into determination of  the class of permutations which induce movable quantum walks.
On the other hand, a movable quantum walk does not always ensure a nontrivial behavior; 
we have to check if the quantum walk generates {\em both} transmission and reflection simultaneously. 
This is one of the other classification problems of quantum walks.

\begin{dfn}[Non-interactive quantum walks]\rm
\label{dfn-non-interactive}
A simplicial quantum walk $U$ on $\mathcal{K}$ is {\em non-interactive} if, for any $\sigma\in \tilde K_n$ and for all $n\in \mathbb{N}$, there is a unique element $\tau \in \tilde K_n$ such that $\langle \delta^{(n)}_\tau, U^m \delta^{(n)}_\sigma \rangle_{\tilde K_n}\not = 0$.
Conversely, $U$ is {\em interactive} if $U$ is not non-interactive.
\end{dfn}
In general, interactive properties of quantum walks depend on their constructions, as indicated in the No-go lemma by Meyer \cite{Meyer}. This problem is also nontrivial in the construction of quantum walks on given simplicial complexes. 
In the case of S-quantum walks, weights determine whether or not quantum walks are interactive. 
In this paper, we have two examples of the non-interactive quantum walks which exhibit only the transmission without reflection. See Figs. \ref{fig-Swalk-R2}-(a) and \ref{fig-Swalk-cylinder}-(a).
Our numerical simulations in the next section suggest that appropriate choice of weights let S-quantum walks interactive. 

\bigskip
Both notions of tethered quantum walks and non-interactivity ask us a question of appropriate construction of quantum walks. 
In other words, these notions ask us whether we can choose an appropriate weight on a given complex $\mathcal{K}$ so that quantum walk $U$ on it is movable and interactive.
Example \ref{ex-tether} indicates that mobility of walks concerns with the choice of permutations, namely, shift operators.
As indicated in the S-quantum walk on $\mathbb{R}^2$ with the weight (\ref{weight-R2-1}) in the next section, which is a non-interactive quantum walk, the interactivity of walks concerns with the choice of coin operators.


\section{Behavior of S-quantum walks -- numerical study}
\label{section-numerical}

In this section, we study how an S-quantum walk $U$ behaves on simplicial complexes. Throughout this section we only consider the case $n=2$. Our object here is to study relationships between geometry of $\mathcal{K}$ and nontrivial dynamics of $U$ such as localization. 
Here we study quantum walks on the following sample spaces:
\begin{description}
\item[\rm Subsection \ref{section-R2} : ] $\mathbb{R}^2$. The corresponding simplicial complex is $\mathcal{K}_0$. Numerical results are shown in Figures \ref{fig-Swalk-R2} and \ref{fig-Swalk-1state}.
\item[\rm Subsection \ref{section-cylinder} : ] an infinite cylinder. The corresponding simplicial complex is $\mathcal{K}_1$. Numerical results are shown in Figures \ref{fig-Swalk-cylinder}-(a), (b) and \ref{fig-Mobius}-(b).
\item[\rm Subsection \ref{section-cylinder-tetra} : ] an infinite cylinder with a tetrahedron. The corresponding simplicial complex is $\mathcal{K}_2$. Numerical results are shown in Figure  \ref{fig-Swalk-cylinder}-(c).
\item[\rm Subsection \ref{section-Mobius} : ] the M\"{o}bius band. The corresponding simplicial complex is $\mathcal{K}_3$. Numerical results are shown in Figure  \ref{fig-Mobius}-(a).

\end{description}

\subsection{S-quantum walk on $\mathbb{R}^2$}
\label{section-R2}
Here we consider one of the simplest examples, the quantum walk on $\mathbb{R}^2$.
\subsubsection{Setting}
\label{section-setting-R2}
First we represent $\mathbb{R}^2$ as a polyhedron via the simplicial decomposition. 
In the case of $\mathbb{R}^2$, one of the simplest decomposition is the uniform triangular decomposition drawn in Figure \ref{fig-R2}.
We shall use such a decomposition as a sample examination. 
Let $\mathcal{K}_0$ be the obtained simplicial complex.

\bigskip
We shall derive the algorithmic procedure of the time evolution so that readers easily follow a series of computations.
To implement the discrete-time evolution of S-quantum walk $U$, 
we take one-to-one correspondence between CONB on $\ell^2(\tilde K_2)$ and its graphical representation shown in Figures \ref{fig-R2} and \ref{fig-state}.  
See also Remark \ref{rem-notation} below. 

\begin{rem}[Notations in Case 1, 2 below and Appendix \ref{section-implementation}]\rm
\label{rem-notation}
We use the following notations for describing one time evolution of quantum walks.
They let us construct time evolution of quantum walks on concrete simplicial complexes systematically.
\begin{description}
\item[Rule 1: ] We label each simplex by integers. In the case of quantum walk on $\mathcal{K}_0$, for example, it is sufficient to consider $2N^2$ $2$-simplices for sufficiently large $N$. We then correspond each simplex to an integer from $0$ to $2N^2-1$, $\{2(iN+j)+k\mid i,j=0,\cdots, N-1, k=0,1\}$. 
In our notations, variable $i$ denotes the $x$-coordinate of indexed simplices and $j$ denotes the $y$-coordinate of indexed simplices. 
Furthermore, our complexes consist of two kinds of triangles, the lower one and the upper one (Figure \ref{fig-state}). 
$k=0,1$ correspond to lower and upper triangles, respectively. 
We label the lower triangle $m$ and the upper triangle $m+1$, 
where $m\in \{2(iN+j)\mid i,j\in \mathbb{Z}\}$ if it is defined, unless otherwise noted. 
\item[Rule 2: ] $\delta_{m,l}$ denotes the basis labeled by $m$ and $l$. 
Here $m$ is the index labeling $|\sigma|$ as an element of $K_2$, and $l\in \{0,1,\cdots, 5\}$ denotes the index determined by the rule drawn in Figure \ref{fig-state}. 
Similarly, we identify the weight $w(\sigma)$ for $\sigma \in \tilde K_2$ with $w_{m,l}$ in the same manner.
\item[Rule 3: ] In the case of quantum walks on cylinder, Subsections \ref{section-cylinder} and \ref{section-cylinder-tetra}, we impose the periodic boundary conditions on boundaries. 
In our arguments, the periodic boundary condition in the $x$-coordinate is imposed.
In the case of M\"{o}bius bands, we impose twisted identification in the $x$-coordinate. Details are shown in Subsection \ref{section-Mobius}.
\end{description}
\end{rem}

\begin{figure}[htbp]\em
\begin{minipage}{1\hsize}
\centering
\includegraphics[width=14.0cm]{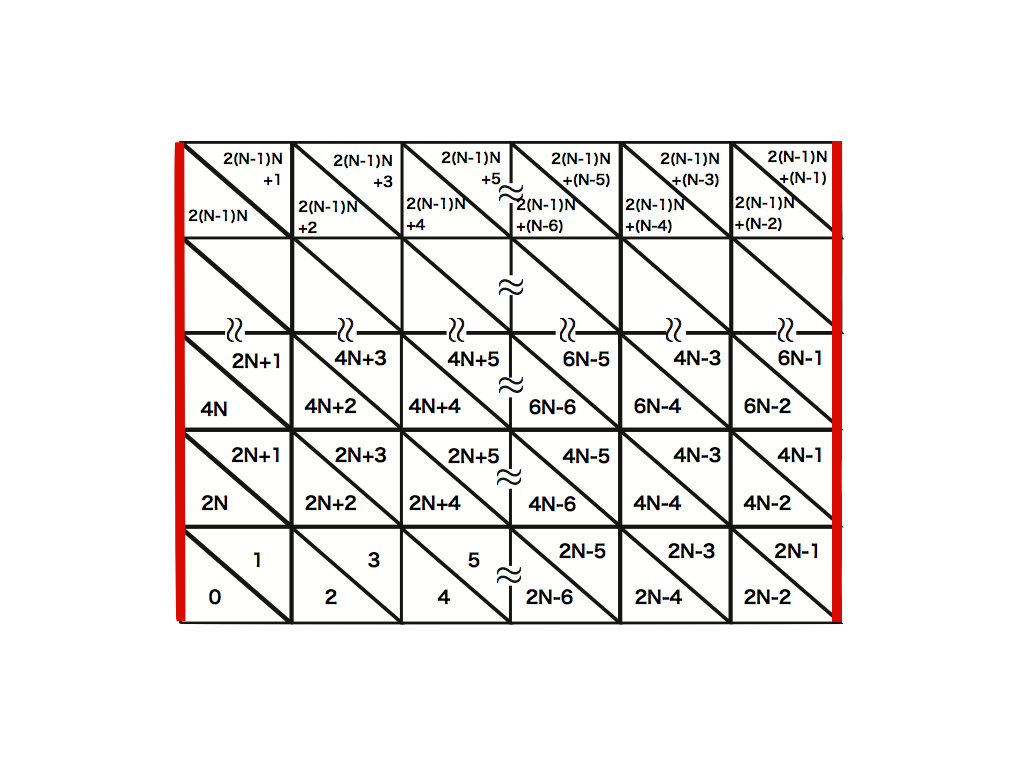}
\end{minipage}
\caption{Uniform simplicial decomposition $\mathcal{K}_0$ of $\mathbb{R}^2$.}
\label{fig-R2}
This decomposition can be realized by, say, dividing the unit square $[0,1]\times [0,1]$ into $2$ triangles. In practical computations we make such complex on $[-N,N]\times [-N,N]\subset \mathbb{R}^2$, consisting of $2N^2$ triangles. When we consider the S-quantum walk on an infinite cylinder (Subsection \ref{section-cylinder}), we impose the periodic boundary condition on, say, two lines drawn by red lines.
\end{figure}

\begin{figure}[htbp]\em
\begin{minipage}{1\hsize}
\centering
\includegraphics[width=12.0cm]{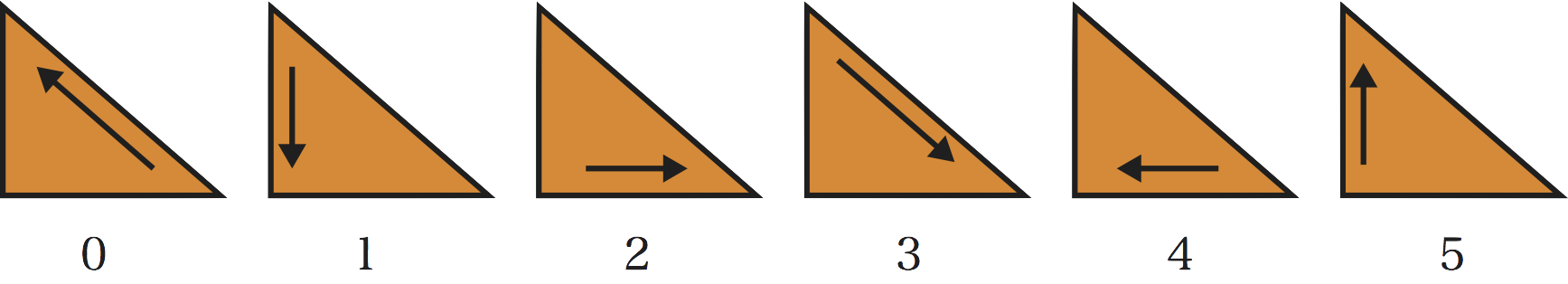}
(a)
\end{minipage}
\begin{minipage}{1\hsize}
\centering
\includegraphics[width=12.0cm]{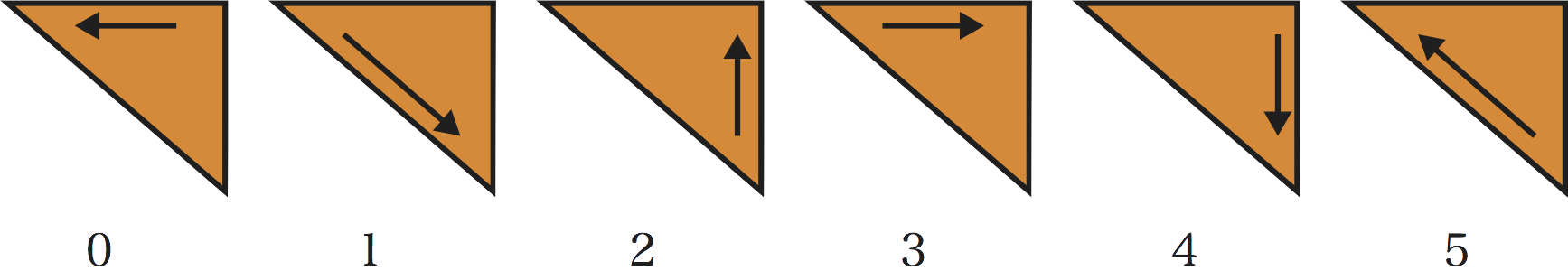}
(b)
\end{minipage}
\caption{Labeling of bases.}
\label{fig-state}
(a): The graphical representation of basis $\delta_{2(iN+j),l}$, $l\in \{0,1,2,3,4,5\}$. (b): The graphical representation of basis $\delta_{2(iN+j)+1,l}$, $l\in \{0,1,2,3,4,5\}$. 
\end{figure}

\begin{description}
\item[Case 1: the lower triangle (Figure \ref{fig-state}-(a))] 
\end{description}
By following the definition of $U$, the evolution of a state on a lower triangle is given by 
\begin{align*}
&\begin{pmatrix}
\delta_{2(iN+j),0}\\
\delta_{2(iN+j),1}\\
\delta_{2(iN+j),2}\\
\delta_{2(iN+j),3}\\
\delta_{2(iN+j),4}\\
\delta_{2(iN+j),5}
\end{pmatrix}\\
&\mapsto
\begin{pmatrix}
(2|w_{2(iN+j),0}|^2-1)\cdot \delta_{2(iN+j),1} \\
(2|w_{2(iN+j),1}|^2-1)\cdot \delta_{2(iN+j),2} \\
(2|w_{2(iN+j),2}|^2-1)\cdot \delta_{2(iN+j),0} \\
(2|w_{2(iN+j),3}|^2-1)\cdot \delta_{2(iN+j),4} \\
(2|w_{2(iN+j),4}|^2-1)\cdot \delta_{2(iN+j),5} \\
(2|w_{2(iN+j),5}|^2-1)\cdot \delta_{2(iN+j),3}
\end{pmatrix}
 +
\begin{pmatrix}
2\overline{w_{2(iN+j),0}}w_{2(iN+j)+1,5} \cdot \delta_{2(iN+j)+1,3}\\
2\overline{w_{2(iN+j),1}}w_{2((i-1)N+j)+1,4} \cdot \delta_{2((i-1)N+j)+1,5}\\
2\overline{w_{2(iN+j),2))}}w_{2(iN+(j-1))+1,3} \cdot \delta_{2(iN+(j-1))+1,4}\\
2\overline{w_{2(iN+j),3}}w_{2(iN+j)+1,1} \cdot \delta_{2(iN+j)+1,2}\\
2\overline{w_{2(iN+j),4}}w_{2(iN+(j-1))+1,0} \cdot \delta_{2(iN+(j-1))+1,1}\\
2\overline{w_{2(iN+j),5}}w_{2((i-1)N+j)+1,2} \cdot \delta_{2((i-1)N+j)+1,0}
\end{pmatrix}.
\end{align*}
The first term of the right hand side represents the \lq\lq reflection" of waves and the second represents the \lq\lq transmission" of waves to the adjacent upper triangle.
\begin{description}
\item[Case 2: the upper triangle (Figure \ref{fig-state}-(b))] 
\end{description}
Similarly, the evolution of the state on an upper triangle is given by 
\begin{align*}
&\begin{pmatrix}
\delta_{2(iN+j)+1,0}\\
\delta_{2(iN+j)+1,1}\\
\delta_{2(iN+j)+1,2}\\
\delta_{2(iN+j)+1,3}\\
\delta_{2(iN+j)+1,4}\\
\delta_{2(iN+j)+1,5}
\end{pmatrix}\\
&\mapsto
\begin{pmatrix}
(2|w_{2(iN+j)+1,0}|^2-1)\cdot \delta_{2(iN+j)+1,1} \\
(2|w_{2(iN+j)+1,1}|^2-1)\cdot \delta_{2(iN+j)+1,2} \\
(2|w_{2(iN+j)+1,2}|^2-1)\cdot \delta_{2(iN+j)+1,0} \\
(2|w_{2(iN+j)+1,3}|^2-1)\cdot \delta_{2(iN+j)+1,4} \\
(2|w_{2(iN+j)+1,4}|^2-1)\cdot \delta_{2(iN+j)+1,5} \\
(2|w_{2(iN+j)+1,5}|^2-1)\cdot \delta_{2(iN+j)+1,3}
\end{pmatrix}
 +
\begin{pmatrix}
2\overline{w_{2(iN+j)+1,0}}w_{2(iN+(j+1)),4} \cdot \delta_{2(iN+(j+1)),5}\\
2\overline{w_{2(iN+j)+1,1}}w_{2(iN+j),3} \cdot \delta_{2(iN+j),4}\\
2\overline{w_{2(iN+j)+1,2}}w_{2((i+1)N+j),5} \cdot \delta_{2((i+1)N+j),3}\\
2\overline{w_{2(iN+j)+1,3)}}w_{2(iN+(j+1)),2} \cdot \delta_{2(iN+(j+1)),0}\\
2\overline{w_{2(iN+j)+1,4}}w_{2((i+1)N+j),1} \cdot \delta_{2((i+1)N+j),2}\\
2\overline{w_{2(iN+j)+1,5}}w_{2(iN+j),0} \cdot \delta_{2(iN+j),1}
\end{pmatrix}.
\end{align*}
As in the case of lower triangles, the first term of the right hand side represents the \lq\lq reflection" of waves and the second represents the \lq\lq transmission" of waves to the adjacent lower triangle.

\bigskip
All states evolve according to one of the above rules. We are then ready to compute S-quantum walk on $\mathbb{R}^2$.

\subsubsection{S-quantum walk on $\mathbb{R}^2$}
Let the triangle $|abc|\in K_2$ be located at the center of $N\times N$ lattice, in particular, $\delta^{(2)}_{\pi^i [abc]} = \delta_{\frac{N}{2}\times N + \frac{N}{2}+0, i}$.
We set the initial state $\Psi_0 = \sum_{i=0}^2\varphi_{\pi^i [abc]} \delta^{(2)}_{\pi^i [abc]} + \sum_{i=0}^2\varphi_{\pi^i [acb]} \delta^{(2)}_{\pi^i [acb]}\in \ell^2(\tilde K_2)$, where 
\begin{equation*}
\varphi_{\pi^i [abc]} = \varphi_{\pi^i [acb]} = 1/\sqrt{6},\quad i=0,1,2.
\end{equation*}
The S-quantum walk with the identical weight, namely, 
\begin{equation}
\label{weight-R2-1}
w(\sigma)\equiv 1/\sqrt{2},\quad \forall \sigma\in \tilde K_2,
\end{equation} 
is shown in Figure \ref{fig-Swalk-R2}-(a). 

Next, we change the weight into
\begin{align}
\notag
w(\sigma_1) &\equiv \sqrt{1/3}\quad \text{ for all $\sigma_1\in \tilde K_2$ with $|\sigma_1|$ being a lower triangle},\\
\label{weight-R2-2}
w(\sigma_2) &\equiv \sqrt{2/3}\quad \text{ for all $\sigma_2\in \tilde K_2$ with $|\sigma_2|$ being an upper triangle}.
\end{align}
The computation result is drawn in \ref{fig-Swalk-R2}-(b). Additionally, the quantum walk with the weight
\begin{align}
\notag
w(\sigma_1) &\equiv \sqrt{1/10}\quad \text{ for all $\sigma_1\in \tilde K_2$ with $|\sigma_1|$ being a lower triangle},\\
\label{weight-R2-3}
w(\sigma_2) &\equiv \sqrt{9/10}\quad \text{ for all $\sigma_2\in \tilde K_2$ with $|\sigma_2|$ being an upper triangle}
\end{align}
is drawn in Figure \ref{fig-Swalk-R2}-(c). In any cases, the state spreads in six-directions drawing {\em a hexagram}. 
In Figure \ref{fig-Swalk-R2}-(a), no reflection occurs during time evolutions. 
These pictures imply that the {\em ballistic} spreading, which is the strongest spreading, is exhibited in S-quantum walks with uniform weights. 
Note that this is a similar behavior to the two-state Grover walk on $\mathbb{Z}$. 
This walk with the weight (\ref{weight-R2-1}) is thus a non-interactive quantum walk. However, we can adjust the weight such as (\ref{weight-R2-2}) and (\ref{weight-R2-3}) to make the interactions. See Figs. \ref{fig-Swalk-R2}-(b) and (c).
These pictures also imply that, the bigger the difference between $w(\sigma_1)$ and $w(\sigma_2)$ is, the smaller the size of hexagram is. This observation is related to the pseudo velocity of the linear spreading.

Next we fix the weight $w$ by (\ref{weight-R2-2}) and change the initial state into $\Psi_0 = \delta^{(2)}_{[abc]}\in \ell^2(\tilde K_2)$. 
Resulting behavior is shown in Figure \ref{fig-Swalk-1state}. In this case, the state spreads drawing a triangle. 
Similarly, the quantum walk with the initial state $\Psi_0 = \frac{1}{\sqrt{2}}\delta^{(2)}_{[abc]} + \frac{1}{\sqrt{2}}\delta^{(2)}_{[bca]} = \frac{1}{\sqrt{2}}\delta^{(2)}_{[abc]} + \frac{1}{\sqrt{2}}\delta^{(2)}_{\pi [abc]}\in \ell^2(\tilde K_2)$ spreads linearly. 
Its support is also a similar triangle as shown in Figure \ref{fig-Swalk-1state}-(b). 
On the other hand, if we consider the quantum walk with the initial state $\Psi_0 = \frac{1}{\sqrt{2}}\delta^{(2)}_{[abc]} + \frac{1}{\sqrt{2}}\delta^{(2)}_{[acb]}\in \ell^2(\tilde K_2)$, then the walk spreads drawing a hexagram like Figure \ref{fig-Swalk-1state}-(c). 

\bigskip

\subsubsection{Comparison with quantum walk on the triangular lattice}
We compare the behavior of the Grover walk on the triangular lattice with S-quantum walk on $\mathbb{R}^2$. 
The asymptotic behavior of the quantum walk on lattices, in particular, the Grover walk is studied not only numerically \cite{TFMK} but mathematically \cite{HKSS2}. 
It is worth studying how the difference of dimension of complexes affect dynamics. 
Our criteria are focused on important and essential properties of quantum walks, {\em localization} and {\em linear spreading}.
\begin{rem}\rm
Here we review localization and linear spreading of quantum walks on a lattice $L$.
\begin{enumerate}
\item We say that {\em localization} occurs if there exists a point ${\bf x}\in L$ such that
\begin{equation*}
\limsup_{n\to \infty} \mu_n({\bf x}) > 0,
\end{equation*}
where $\mu_n$ is the probability of the quantum walk $\Psi_n$ at time $n$ and position ${\bf x}\in L$ given by
\begin{equation*}
\mu_n({\bf x}) := \sum_{e\in \tilde K_1:o(e)={\bf x}}\|\Psi_n(e)\|^2_{\tilde K_1}.
\end{equation*}
Here $o(e)$ is the origin vertex of the directed edge $e$. For example, if $e=ab$, then $o(e)=a$. Similarly, if $e=ba$, then $o(e)=b$. See \cite{HKSS2} for details.
\item We say that {\em linear spreading} happens if
\begin{equation}
\label{linear-lattice}
\lim_{n\to \infty} \frac{V_n}{n^2} \in (0,\infty),
\end{equation}
where $V_n$ is the radial variance of the quantum walk $\Psi_n$ defined by
\begin{equation}
\label{variance-lattice}
V_n:= \left\{ \sum_{x\in L}\|{\bf x}\|^2 \mu_n({\bf x})^2 - \left(\sum_{x\in L}\|{\bf x}\| \mu_n({\bf x})\right)^2\right\}
\end{equation}
and $\|{\bf x}\|$ is the Euclidean norm of ${\bf x}$.
\end{enumerate}
The relationship (\ref{linear-lattice}) corresponds to weak convergence of the random variable $\Psi_n/n$. The denominator $n$ corresponds to the linearity of spreading rates, which is a quite different property from classical random walks.
\end{rem}

In \cite{HKSS2}, it is shown that the Grover walk on the triangular lattice exhibits {\em both} localization and linear spreading, as shown in Figure \ref{fig-QW-triangle}. 
On the contrast, we cannot observe localization in the case of S-quantum walks on $\mathcal{K}_0$, which can be seen in Figure \ref{fig-Swalk-R2} as noted before. This is a crucial difference between quantum walk on the triangular lattice and on $\mathcal{K}_0$. Such an observation can be stated more topologically. 
In \cite{HKSS2}, it is shown that the Grover walk on a finite graph $G$ exhibits localization if $G$ possesses {\em a cycle} in the homological sense. In other words, nontrivial first homology classes of $G$ induce localization. 
On the other hand, the simplicial complex $\mathcal{K}_0$ has no \lq\lq cycles", that is, topological holes such as rings and cavities. 
One can thus expect that the dimension and topology of simplicial complexes have great influences on the behavior of quantum walks.

\bigskip
Next, in order to study the spreading rate of $U$ on $\mathcal{K}_0$, we compute the radial variance
\begin{equation}
\label{variance-complex}
V_n = V_n^{(f)}:= \left\{\sum_{|\sigma|\in K_2}\sum_{{\bf x} =(x,y)\in |\sigma|} \|{\bf x}\|^2 \mu_n^{(f)}(|\sigma|)^2 - \left(\sum_{|\sigma|\in  K_2}\sum_{{\bf x} =(x,y)\in |\sigma|}\|{\bf x}\| \mu_n^{(f)}(|\sigma|) \right)^2\right\},
\end{equation}
corresponding to (\ref{variance-lattice}). Here $f$ denotes an initial state, but we often drop the letter $(f)$ if initial states are clear in the context.
In practical computations, we condense the density and the probability $\mu_n$ of each simplex on a point, say, a vertex for simplicity. 
The time evolution of $V_n/n^2$ is drawn in Figure \ref{fig-spread}-(a). This figure shows that the quantity $V_n/n^2$ is close asymptotically to a certain positive value like (\ref{linear-lattice}), which implies that {\em linear spreading happens for S-quantum walk on $\mathcal{K}_0$}.

\subsection{S-quantum walk on an infinite cylinder}
\label{section-cylinder}
Next we consider the S-quantum walk on an infinite cylinder $S^1\times \mathbb{R}$. 
Note that an infinite cylinder possesses a topological cycle, more precisely, the nontrivial first homology class (see Figure \ref{fig-homology} in Appendix \ref{section-complexes} for quick and intuitive understanding of homology). 
In the case of Szegedy walks on graphs, nontrivial cycle structure can induce localization \cite{HKSS2}. 
Our focus in this section is to study if the nontrivial homology of simplicial complexes can induce localization.
As in the case of $\mathbb{R}^2$, we derive a triangular decomposition of $S^1\times \mathbb{R}$ as the first step.

\subsubsection{Setting}
First we prepare a large rectangular domain with the uniform triangular decomposition. In this case we set $R\times N$-pairs of lower and upper triangles, where $R$ is a positive integer not so large, say, $10$ and $N$ is a large integer, say, $2000$. We additionally impose the periodic boundary condition on the $x$-axis. See Figure \ref{fig-R2}. The time evolution rule is the same as in the case of $\mathbb{R}^2$ with the additional periodic boundary condition. Let $\mathcal{K}_1$ be the obtained simplicial complex.

\subsubsection{S-quantum walk on an infinite cylinder}
We set the initial state $\Psi_0 = \sum_{i=0}^2\varphi_{\pi^i [abc]} \delta^{(2)}_{\pi^i [abc]} + \sum_{i=0}^2\varphi_{\pi^i [acb]} \delta^{(2)}_{\pi^i [acb]}\in \ell^2(\tilde K_2)$, where 
\begin{equation}
\label{initial-cylinder}
\varphi_{\pi^i [abc]} = \varphi_{\pi^i [acb]} = 1/\sqrt{6},\quad i=0,1,2.
\end{equation}
The S-quantum walk with the identical weight, namely, $w(\sigma)\equiv 1/\sqrt{2}$ is shown in Figure \ref{fig-Swalk-cylinder}-(a). 
Next, we change the weight into (\ref{weight-R2-2}), namely,
\begin{align*}
w(\sigma_1) &\equiv \sqrt{1/3}\quad \text{ for all $\sigma_1\in \tilde K_2$ with $|\sigma_1|$ being a lower triangle},\\
w(\sigma_2) &\equiv \sqrt{2/3}\quad \text{ for all $\sigma_2\in \tilde K_2$ with $|\sigma_2|$ being an upper triangle}.
\end{align*}
The computation result is depicted by Figure \ref{fig-Swalk-cylinder}-(b). 
In both cases, we can observe a positive probability density on the central circle on which the initial state is located. 
These pictures indicate that {\em localization occurs for $U$ on $S^1\times \mathbb{R}$}. 
Now we compute {\em the time-averaged probability} of S-quantum walks with the initial state $f$ defined by
\begin{equation}
\label{ave-meas}
\bar \mu_T(|\sigma|) \equiv \bar \mu_T^{(f)}(|\sigma|) := \frac{1}{T}\sum_{m=0}^{T-1} \mu_m^{(f)}(|\sigma|) ,\quad |\sigma| \in K_2.
\end{equation}
The calculation result is shown in Figure \ref{fig-time-average}. The time-averaged probability $\bar \mu_T(|abc|)$ converges to a positive value as $T\to \infty$, which leads to the other suggestion that the localization occurs at the starting simplex $|abc|$. 

As S-quantum walks on $\mathbb{R}^2$, time evolutions of the radial variance $V_n/n^2$ with different weights are shown in Figure \ref{fig-spread}-(c) and (d), where $V_n$ is given by (\ref{variance-complex}). 
These figures show that the quantity $V_n/n^2$ converges to a certain positive value for quantum walks on $\mathcal{K}_1$. These observations imply that {\em linear spreading happens on the S-quantum walk on $S^1\times \mathbb{R}$}. 

\begin{rem}\rm
In the case of $S^1\times \mathbb{R}$, there is a direction of the quantum walk such that a state stays in a compact subset of $S^1\times \mathbb{R}$, which reflects a bounded ring structure of the cylinder, that is, a generator of the nontrivial first homology group
$H_1(S^1\times \mathbb{R};\mathbb{Z})\cong \mathbb{Z}$.
\end{rem}

\subsection{S-quantum walk on an infinite cylinder with a tetrahedron}
\label{section-cylinder-tetra}
In the third example, we consider the S-quantum walk on $\mathcal{K}_2$: an infinite cylinder discussed in the previous subsection with a tetrahedron. Tetrahedra have nontrivial cavities which generate the second homology classes. 
Here we consider the effect of such homology classes on S-quantum walks.

\subsubsection{Setting}
As in previous subsections, we construct a simplicial complex as the first step. We prepare the simplicial complex $\mathcal{K}_1$  and attach a tetrahedron somewhere on $\mathcal{K}_1$ to obtain the simplicial complex $\mathcal{K}_2$. 
The brief illustration of $\mathcal{K}_2$ around a tetrahedron is shown in Figure \ref{fig-tetrahedron}. 
The labeling of $2$-simplices are derived as in the case of $\mathcal{K}_1$ as well as the description in Figure \ref{fig-tetrahedron}. One can realize this construction in the standard manner and we omit the detail.

The additional algorithmic procedure of S-quantum walks on $\mathcal{K}_2$ is listed in Appendix \ref{section-implementation}.

\subsubsection{S-quantum walk on an infinite cylinder with a tetrahedron}
We set the initial state $\Psi_0\in \ell^2(\tilde K_2)$ by (\ref{initial-cylinder}).
Note that the $2$-simplex $|abc|$ is now set by a part of the tetrahedron in $\mathcal{K}_2$. See Figure \ref{fig-tetrahedron}.
We then fix the weight 
\begin{align}
\label{weight-Grover}
&w(\sigma)\equiv 1/\sqrt{\deg (\tilde d_0^{(2)} \sigma)},
\quad \text{ where }\deg (\tilde d_0^{(2)} \sigma):= \sharp\{\tau \in \tilde K_1: \tilde d_0^{(2)} \sigma = \tau, \sigma \in \tilde K_2\}.
\end{align}
In the case of quantum walks on graphs, such weight corresponds to the Grover walk on graphs. 

The S-quantum walk with the weight (\ref{weight-Grover}) and the initial state $\Psi_0$ is shown in Figure \ref{fig-Swalk-cylinder}-(c). 
In this example, localization cannot be observed around the central cycle comparing with the previous example on $\mathcal{K}_1$. 
Instead, we can see the localization on the \lq\lq bump", that is, the tetrahedron.
Indeed, in the early stage the state with positive probability density on the circle is observable. 
After sufficient large time evolutions, such a state is absorbed by the tetrahedron. 
Such a phenomenon can be observed from time-averaged probability $\bar \mu_T$ given in (\ref{ave-meas}). 
The measure $\bar \mu_T$ at the starting simplex $|abc|$ converges to a positive value as $T\to \infty$ as shown in Figure \ref{fig-time-average}. 
On the other hand, the measure at a simplex off the tetrahedron but on the central cycle tends to zero, while the measure for S-quantum walk on $\mathcal{K}_1$ at the simplex on the central cycle tends to a positive value. 
Compare the graph \lq\lq tetra-off" with \lq\lq cylinder-transmit-adjacent" in Figure \ref{fig-time-average}.
These observations imply that the higher dimensional homology classes absorb states running on lower dimensional homology classes. 
In other words, the higher dimensional homology classes induce the stronger localization.


\subsection{S-quantum walk on the M\"{o}bius band}
\label{section-Mobius}
In the last example, we consider S-quantum walks on the M\"{o}bius band. 
As in the case of $\mathbb{R}^2$, we derive a triangular decomposition $\mathcal{K}_3$ of the M\"{o}bius band as the first step. 
Construction of $\mathcal{K}_3$ is the same as $\mathcal{K}_1$ except the identification of the boundary. 
In general, the infinite M\"{o}bius band is constructed identifying $(0,t)\in [0,1]\times \mathbb{R}$ with $(1,-t)\in [0,1]\times \mathbb{R}$ in the infinite strip $[0,1]\times \mathbb{R}$. 
With this in mind, we construct $\mathcal{K}_3$ and S-quantum walks on it. 
It is well known that the M\"{o}bius band is non-orientable in the sense of differentiable manifolds or vector bundles, while it has the same homology as an infinite cylinder. 
See e.g. \cite{MS}. Our interest here is whether the orientability of simplicial complexes affect dynamics of S-quantum walks.

\subsubsection{Setting}
First we prepare a large rectangular domain with the uniform triangular decomposition. In this case we set $R\times N$-pairs of lower and upper triangles, where $R$ is a positive integer not so large, say, $8$ and $N$ is a large integer, say, $2500$. We additionally impose the twisted boundary condition on the $x$-axis. See Figure \ref{fig-Mobius-rule}. The time evolution rule is the same as in the case of $\mathbb{R}^2$ with additional twisted boundary condition. Note that one-periodic motion of quantum walks in $x$-direction changes the chirality, in other words, the rotating direction of arrows. This is because the M\"{o}bius band is non-orientable. 

\begin{figure}[htbp]\em
\begin{minipage}{0.5\hsize}
\centering
\includegraphics[width=8.0cm]{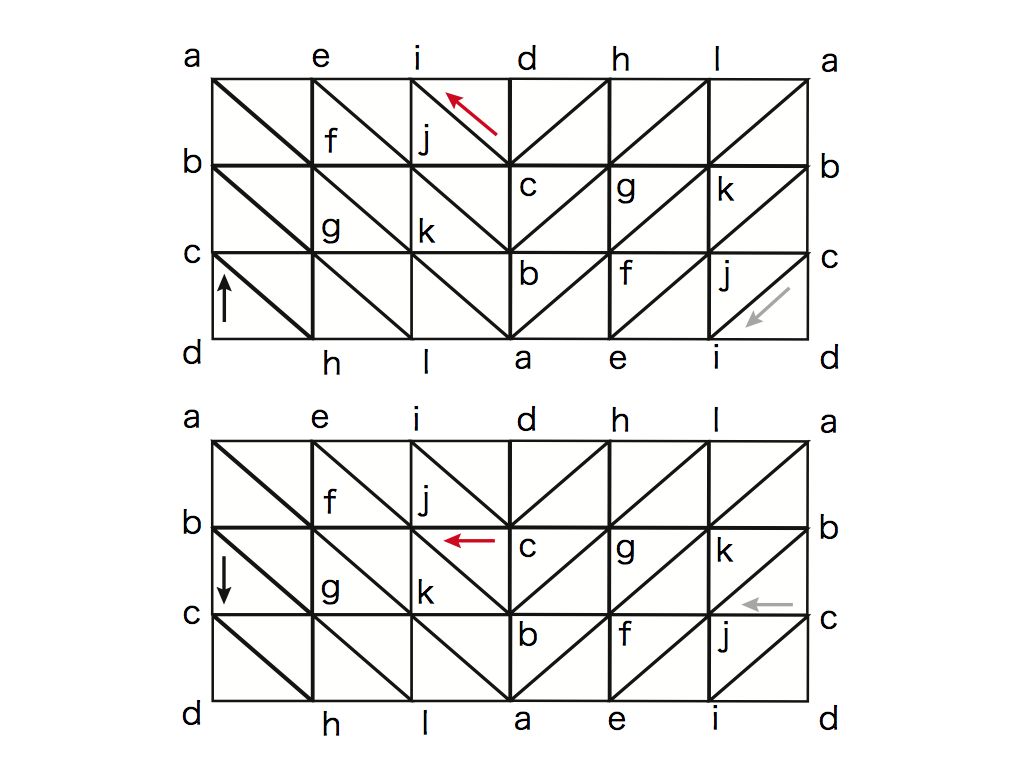}
\end{minipage}
\begin{minipage}{0.5\hsize}
\centering
\includegraphics[width=8.0cm]{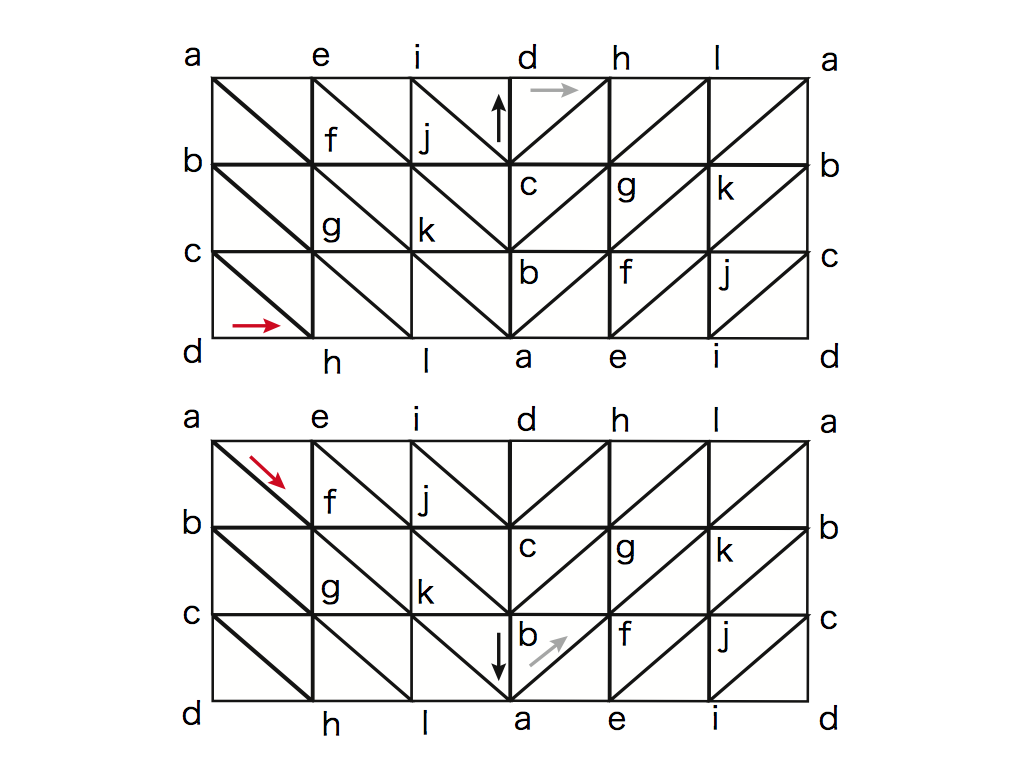}
\end{minipage}
\caption{Implementation of quantum walks on the M\"{o}bius band.}
\label{fig-Mobius-rule}
These figures show implementations of the construction of $\mathcal{K}_3$ and S-quantum walks on it.
In these figures, the twisted identification of the boundary is imposed on $|ab|, |bc|$ and $|cd|$. This identification changes the rotating direction of triangles. See $|abf|$, for example. $\mathcal{K}_3$ can be constructed by this manner.

These figures describe how quantum walks behave on $\mathcal{K}_3$. 
Let black arrows be incident states. 
The definition of S-quantum walks yields transmitted states drawn by grey arrows, if necessary, via the periodic identification on the boundary.
Grey arrows actually describe state on the copy of complexes. 
True transmitted states are red arrows via identification of the copy. 
\end{figure}

\subsubsection{S-quantum walk on the M\"{o}bius band}
We set the initial state $\Psi_0 = \sum_{i=0}^2\varphi_{\pi^i[abc]} \delta^{(2)}_{\pi^i[abc]} + \sum_{i=0}^2\varphi_{\pi^i[acb]} \delta^{(2)}_{\pi^i[acb]}\in \ell^2(\tilde K_2)$ at the center of $\mathcal{K}_3$, where $\varphi_{\pi^i[abc]} = \varphi_{\pi^i[acb]} = 1/\sqrt{6},\ i=0,1,2$.
Let the weight $w(\sigma)$ be
\begin{align}
\notag
w(\sigma_1) &\equiv \sqrt{0.9/2}\quad \text{ for all $\sigma_1\in \tilde K_2$ with $|\sigma_1|$ being a lower triangle},\\
\label{weight-Mobius}
w(\sigma_2) &\equiv \sqrt{1.1/2}\quad \text{ for all $\sigma_2\in \tilde K_2$ with $|\sigma_2|$ being an upper triangle}.
\end{align}
The S-quantum walk with this weight is shown in Figure \ref{fig-Mobius}-(a). 
We observe that quantum walker goes away from the center drawing spirals. This behavior is completely different from the case of infinite cylinders, although the M\"{o}bius band is homologically identical with an infinite cylinder. 
This observation implies that quantum walks on the M\"{o}bius band does {\em not} admit localization.
This suggestion can be also observed from the viewpoint of the time-averaged probability. 
Consider the state at the starting simplex $|\sigma_0| = |abc|$, which is labelled by $(i,j,k) = (N/2, N/2, 0)$, and the adjacent simplex $|\sigma_0'|$ labelled by $(i,j,k) = (N/2, N/2, 1)$. 
Note that the support of the initial state $\Psi_0$ is $|\sigma_0|$ and that $|\sigma_0|$ is on the central cycle where localization is observed in the case of $\mathcal{K}_1$. Figure \ref{fig-time-average} shows that both $\bar \mu_T^{(\Psi_0)}(|\sigma_0|)$ and $\bar \mu_T^{(\Psi_0)}(|\sigma_0'|)$ tends to zero as $T\to \infty$.

As a comparison, quantum walks on $\mathcal{K}_1$ is shown in Figure \ref{fig-Mobius}-(b). In this figure we impose the same initial condition and weights as the case of Figure \ref{fig-Mobius}-(a). 
As we have seen in Subsection \ref{section-cylinder}, quantum walks on $\mathcal{K}_1$ has localization, which can be seen from the positive probability density at the center in $y$-direction. 

Note again that the M\"{o}bius band has the identical homology with infinite cylinder. Observations in this subsection imply that more detailed geometric features, like orientation, than homology can affect dynamics of our quantum walks.

\section{Conclusion and Discussion}
\label{section-conclusion}
In this paper we constructed a new type of quantum walks on simplicial complexes. 
We also discussed behavior of our quantum walks, simplicial quantum walks, with numerical simulations. 
We numerically observe the following.
\begin{itemize}
\item The permutation $\pi : [abc]\mapsto [bca]$ induces movable S-quantum walks (Definition \ref{dfn-tethered}).
\item Appropriate choice of weights induce interactive S-quantum walks (Definition \ref{dfn-non-interactive}). 
\item Linear spreading of S-quantum walks happen as in the case of quantum walks on lattices such as $\mathbb{Z}^d$.
\item S-quantum walks on $\mathcal{K}_0$, the triangulation of $\mathbb{R}^2$, do not admit localizations. 
This is completely different from quantum walks on the triangular lattice embedded in $\mathbb{R}^2$. 
In other words, traditional quantum walks on $\mathbb{Z}^d$ and simplicial quantum walks exhibit different behavior as traditional multi-dimensional quantum walks.
\item Localization occurs when the simplicial complex has non-trivial homological structures, namely, rings or cavities.
\item Localizations caused by the second homology classes (i.e. cavities) absorb those by the first homology classes (i.e. rings).
\item Localizations depend on the orientability of polyhedra. 
\end{itemize}
Our simplicial quantum walks have the same main features, like linear spreading and localizations, as walks on graphs. 
In such a sense, our simplicial quantum walks are ones of multi-dimensional analogues of quantum walks on graphs.
We also observe different phenomena from quantum walks on graphs, which reflects geometry of simplicial complexes. 
All observations herein can also hold for quantum walks on multi-dimensional simplicial complexes.

On the other hand, our simplicial quantum walks give us a nontrivial problem concerning exhibition of nontrivial behavior, namely, whether $\pi\in \mathcal{S}_{n+1}$ induces tethered or movable quantum walks. 
As mentioned in the end of Section \ref{section-construction},
an $n$-dimensional quantum walk can be trivial in the sense that $\pi\in \mathcal{S}_{n+1}$ induces tethered quantum walks, if the walk is associated with $\pi\in \mathcal{S}_{n+1}$ whose order is less than $n+1$.
Such a phenomenon cannot be seen in typical quantum walks on graphs, such as $2$-state or $3$-state Grover walk on $\mathbb{Z}$.
As mentioned above, we numerically observed that $\pi : [abc]\mapsto [bca]$ induces movable quantum walks. 
Even for permutations of the form (\ref{sample-permutation}), however, it is in general nontrivial if a given permutation $\pi$ induces movable quantum walks. 
This classification problem is very important to construct simplicial quantum walks since, 
as our observations with $n=2$ show, it is intrinsically related to non-triviality of quantum walks. 
In particular, tethered quantum walks concerns with non-triviality of walks from the viewpoint of shift operators, while non-interactivity concerns with non-triviality from the viewpoint of coin operators.


\bigskip
We end this paper proposing further directions of our arguments.

\begin{description}
\item[(1) Characterization of the spectrum of quantum walk $U$ and localization.] 
\end{description}
The spectrum of $U$ is a core for understanding quantum walks, in particular, related to localization. 
In \cite{HKSS2} and preceding works therein, cycles of graphs can characterize spectrum of $U$ which corresponds to localization. Our numerical results imply that such correspondence are valid for simplicial quantum walks. 
Our numerical results also show that deeper geometric features of simplicial complexes than graphs, like homology of the higher order and orientations, can affect behavior of quantum walks.
One can guess that spectrum of $U$ reveals geometry of underlying spaces as well as multi-dimensional aspect of quantum walks, which will be seen in the forthcoming paper \cite{MOS}.

\begin{description}
\item[(2) Asymptotic behavior of S-quantum walks.]
\end{description}
As quantum walks on $\mathbb{Z}^d$ or graphs, the asymptotic behavior of quantum walks is also the heart for understanding S-quantum walks. 
The weak convergence of distributions is the key to describe exhibition of linear spreading of quantum walks (e.g. \cite{K1}). Fortunately, we numerically observe that the linear spreading happens in many examples. This fact implies that various techniques for the asymptotic behavior of quantum walks on graphs can be also applied to our quantum walks, which will be seen in the forthcoming paper \cite{MOS}.

\section*{Acknowledgements}
KM was partially supported by Coop with Math Program, a commissioned project by MEXT. OO was partially supported by JSPS KAKENHI Grant Number 24540208. ES was partially supported by JSPS Grant-in-Aid for Young Scientists (B) (No. 25800088).

\bibliographystyle{jplain}
\bibliography{QW-simplicial}

\appendix
\section{Simplicial complexes and homology groups}
\label{section-complexes}

We state a quick review of homology of simplicial complexes for readers who are not familiar with it. See e.g. \cite{KMM} for details.

\begin{dfn}\rm
Let $\mathbb{R}^N$ be a Euclidean space and $O_N$ be the origin of $\mathbb{R}^N$. Let $a_0,a_1,\cdots, a_n \in \mathbb{R}^N$ be points so that $n$ vectors $\{\overrightarrow{a_0a_i}\}_{i=1}^n$ are linearly independent.
An {\em $n$-simplex} is a set $|\sigma| \subset \mathbb{R}^N$ given by
\begin{equation*}
|\sigma| = \left\{\sum_{i=1}^n \lambda_i \overrightarrow{a_0a_i}\mid \lambda_i \geq 0,\ \sum_{i=1}^n \lambda_i = 1\right\}.
\end{equation*}
We also write $|\sigma|$ as $|a_0a_1 \cdots a_n|$ if we write the dependence of points $\{a_i\}_{i=0}^n$ explicitly.
A $k$-{\em face} of an $n$-simplex $|\sigma| = |a_0a_1\cdots a_n|$ is a $k$-simplex $|\tau|$ generated by $k$ points in $\{a_0\}_{i=0}^n$. In such a case, $|\sigma|$ is called {\em a coface} of $|\tau|$.
An $(n-1)$-face of an $n$-simplex $\sigma$ is often called {\em a primary face} of $\sigma$.
\end{dfn}
For example, for a given simplex $|\sigma| = |abc|$, edges $|ab|$, $|bc|$ and $|ca|$ are primary faces of $\sigma$. Also, vertices $|a|$, $|b|$ and $|c|$ are $0$-faces of $|\sigma|$. Finally, $|\sigma|$ is a coface of $|a|$, $|b|$, $|c|$, $|ab|$, $|bc|$ and $|ca|$.

\begin{dfn}\rm
A {\em simplicial complex} $\mathcal{K}$ is the collection of simplices satisfying
\begin{itemize}
\item If $|\sigma| \in \mathcal{K}$, then all faces of $|\sigma|$ are also elements in $\mathcal{K}$.
\item If $|\sigma_1|, |\sigma_2| \in \mathcal{K}$ and if $|\sigma_1|\cap |\sigma_2| \not = \emptyset$, then $|\sigma_1|\cap |\sigma_2|$ is a face of both $|\sigma_1|$ and $|\sigma_2|$. 
\end{itemize}
For a given simplicial complex $\mathcal{K}$, the union of all simplices of $\mathcal{K}$ is the {\em polytope} of $\mathcal{K}$ and is  denoted by $\mathcal{K}$. A set $P$ is a {\em polyhedron} if it is the polytope $\mathcal{K}$ of a simplicial complex $\mathcal{K}$.
\end{dfn}

Let $\mathcal{K} = \{K_k\}_{k\geq 0}$ be a simplicial complex, where $K_k = \{|\sigma| \in \mathcal{K} \mid |\sigma| \text{ is a $k$-simplex}\}$. If $n = \max \{k \mid K_k \not = \emptyset\} < \infty$, then we call $\mathcal{K}$ an {\em $n$-dimensional} simplicial complex. 
For a simplicial complex $\mathcal{K}$ with $\dim \mathcal{K}= n$, its {\em $m$-skeleton} is defined by $\mathcal{K}^{(m)} := \{K_k\}_{k\geq 0}^m$ for $m\leq n$.
For each $|\sigma| \in K_k$, we call the number $k$ {\em the dimension} of $|\sigma|$. Simplicial complexes admit several classes to be considered.

\begin{dfn}\rm
\label{dfn-strong-conn}
For a simplicial complex $\mathcal{K}$, a {\em facet} in $\mathcal{K}$ is a simplex $\sigma \in \mathcal{K}$ which is maximal with respect to the inclusion relation of sets. A simplicial complex $\mathcal{K}$ is {\em pure} if all facets in $\mathcal{K}$ have an identical dimension. An $n$-dimensional pure simplicial complex $\mathcal{K}$ is {\em strongly connected} if, for each $|\sigma|, |\tau| \in K_n$, there is a sequence of $n$-simplices $\{|\sigma_j|\}_{j=0}^k$ with $|\sigma_0| = |\sigma|$ and $|\sigma_k| = |\tau|$ such that $|\sigma_{j-1}| \cap |\sigma_j|$ is a primary face of $|\sigma_{j-1}|$ and $|\sigma_j|$ for $j=1,\cdots, n$. 
\end{dfn}

\begin{rem}\rm
We often call an $(n-1)$-face of an $n$-simplex $|\sigma|$ {\em a facet of $|\sigma|$}. It is completely different from facets in simplicial complexes.
\end{rem}

The core of homology is to translate geometric objects to algebraic ones in terms of chains.
\begin{dfn}\rm
Let $\mathcal{K}$ be a simplicial complex. For each $|\sigma| = |a_0 a_1 \cdots a_k|\in K_k$, the associated {\em $k$-chain} is the function $\widehat{|\sigma|} : K_k \to \{0,1\}$ given by
\begin{equation*}
\widehat{|\sigma|}(|\sigma'|) := \begin{cases}
	1 & \text{ if $\sigma' = \sigma$}\\
	0 & \text{ otherwise}
	\end{cases}
\end{equation*}
with the following rule: for a permutation $\pi\in S_{k+1}$, identify $\widehat{|\sigma|}(\pi |\sigma|)$ with $(\det \pi)\widehat{|\sigma|}(|\sigma|)$. 
\end{dfn}

Define 
\begin{equation*}
C_k(\mathcal{K}):= \left\{ \sum_{j=1}^k a_j \widehat{|\sigma_j|}\mid a_j \in \mathbb{Z},\ |\sigma_j| \in K_k\right\}.
\end{equation*}
This is called {\em $k$-th chain group} of $\mathcal{K}$, which is a $\mathbb{Z}$-module.

\begin{dfn}\rm
For $|\sigma| = |a_0 a_1 \cdots a_k|\in K_k$, define the {\em boundary} $\partial \widehat{|\sigma|}$ of the chain $\widehat{|\sigma|}$ by
\begin{equation*}
\partial_k \widehat{|\sigma|} := \sum_{j=0}^k (-1)^j \widehat{|\cdots a_{j-1}a_{j+1}\cdots |},
\end{equation*}
which is an element in $C_{k-1}$. Extending linearly this definition, we obtain a linear map $\partial_k : C_k(\mathcal{K})\to C_{k-1}(\mathcal{K})$.
This map is called {\em ($k$-th) boundary map} of $\mathcal{K}$. 
\end{dfn}
The important property of boundary maps is the following.
\begin{prop}
\label{prop-complex}
$\partial_k \circ \partial_{k+1} = 0: C_{k+1}(\mathcal{K})\to C_{k-1}(\mathcal{K})$.
\end{prop}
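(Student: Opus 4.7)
The plan is to verify the identity on generators by a direct signed-sum computation. Since $\partial_k$ and $\partial_{k+1}$ are $\mathbb{Z}$-linear, it suffices to show $\partial_k \partial_{k+1} \widehat{|\sigma|} = 0$ for an arbitrary generator $|\sigma| = |a_0 a_1 \cdots a_{k+1}| \in K_{k+1}$, and extend by linearity.

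First I would expand using the boundary formula given in the definition,
\begin{equation*}
\partial_{k+1} \widehat{|\sigma|} = \sum_{j=0}^{k+1} (-1)^j \widehat{|a_0 \cdots \widehat{a_j} \cdots a_{k+1}|},
\end{equation*}
where the hat over $a_j$ denotes omission of that vertex, and then apply $\partial_k$ termwise. This produces a double sum indexed by ordered pairs $(j,i)$ with $j$ the index of the vertex removed first and $i$ the index of the vertex removed second.

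The heart of the argument is the sign bookkeeping, which I would split into the two cases $i<j$ and $i>j$. When $i<j$, after first removing $a_j$ the vertex $a_i$ is still in position $i$ of the reduced list, so the contribution carries the sign $(-1)^{j}(-1)^{i}=(-1)^{i+j}$. When $i>j$, after removing $a_j$ the vertex $a_i$ has shifted to position $i-1$, yielding sign $(-1)^{j}(-1)^{i-1}=(-1)^{i+j-1}$. Regrouping by unordered pairs $\{p,q\}$ with $p<q$, the chain $\widehat{|a_0 \cdots \widehat{a_p} \cdots \widehat{a_q} \cdots a_{k+1}|}$ appears exactly twice: once with sign $(-1)^{p+q}$ (remove $a_q$ first, then $a_p$) and once with sign $(-1)^{p+q-1}$ (remove $a_p$ first, then $a_q$). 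These cancel in pairs, so the total sum vanishes.

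The only step requiring real care is the index shift in the $i>j$ case; once that is tracked correctly, the cancellation is immediate and the proof is a standard mechanical calculation. I do not anticipate any substantive obstacle beyond careful notation.
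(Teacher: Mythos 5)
Your computation is correct and complete: the double sum over ordered pairs, the index shift in the $i>j$ case, and the pairwise cancellation by sign $(-1)^{p+q}$ versus $(-1)^{p+q-1}$ are exactly the standard argument. The paper itself states Proposition \ref{prop-complex} without proof (the appendix is a review of simplicial homology deferring details to the cited reference), so there is nothing to compare against; your proof is the canonical one and fills that gap correctly.
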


A pair $\{C_k(\mathcal{K}), \partial_k\}_{k\in \mathbb{Z}}$ consisting of sequences of chain groups and boundary maps is called a {\em chain complex} of $\mathcal{K}$.
Then we are ready to define homology groups.
\begin{dfn}\rm
For a simplicial complex $\mathcal{K}$, we define
\begin{equation*}
Z_k(\mathcal{K}):= {\rm Ker}\partial_k,\quad  B_k(\mathcal{K}):= {\rm Im}\partial_{k+1}.
\end{equation*}
Both are submodules of $C_k(\mathcal{K})$. An element in $Z_k(\mathcal{K})$ is called a {\em $k$-cycle} and an element in $B_k(\mathcal{K})$ is called a {\em $k$-boundary}. Thanks to Proposition \ref{prop-complex}, $B_k(\mathcal{K})$ is a submodule of $Z_k(\mathcal{K})$. Thus the quotient module
\begin{equation*}
H_k(\mathcal{K}):= Z_k(\mathcal{K}) / B_k(\mathcal{K})
\end{equation*}
can be considered. This quotient group is called {\em the $k$-th homology group of $\mathcal{K}$}. Roughly speaking, the $k$-th homology group describe the information of $k$-dimensional holes in $\mathcal{K}$.
\end{dfn}

The simple example of homology is shown in Figure \ref{fig-homology}.
\begin{figure}[htbp]\em
\begin{minipage}{0.5\hsize}
\centering
\includegraphics[width=3.0cm]{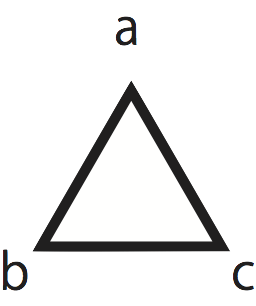}
(a)
\end{minipage}
\begin{minipage}{0.5\hsize}
\centering
\includegraphics[width=3.0cm]{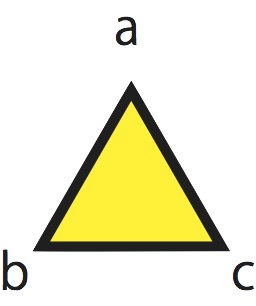}
(b)
\end{minipage}
\caption{The first homology class : a simple example.}
\label{fig-homology}
(a). A triangle without any $2$-simplices. In this case, the chain $\widehat{|ab|} + \widehat{|bc|} + \widehat{|ca|}$ becomes a $1$-cycle. Since no $2$-simplices exist, this $1$-cycle defines a generator of the first homology group, which indicates that this triangle has a hole.\\
(b). A triangle with a $2$-simplex $|abc|$. In this case, the chain $\widehat{|ab|} + \widehat{|bc|} + \widehat{|ca|}$ becomes a $1$-cycle. On the other hand, this $1$-cycle is also the boundary, which follows from $\widehat{|ab|} + \widehat{|bc|} + \widehat{|ca|} = \partial_2(\widehat{|abc|})$. This fact implies that this filled triangle has the trivial first homology group, in other words, the filled triangle does not have any holes.
\end{figure}

\section{Implementations: time evolution of the S-quantum walk on an infinite cylinder with a tetrahedron}
\label{section-implementation}

In this section we describe concrete implementations of time evolutions of S-quantum walk on $\mathcal{K}_2$, in particular, on and around tetrahedron drawn in Figure \ref{fig-tetrahedron}. On the other region, such evolutions follow from those on $\mathcal{K}_0$ discussed in Subsection \ref{section-setting-R2}. All labeling of simplices and bases on them are drawn in Figure \ref{fig-tetrahedron} and \ref{fig-state-tetrahedron}, respectively. Readers should refer to descriptions in these figures to study time evolutions stated below.

For the notation rule stated in Remark \ref{rem-notation}, we use bold-letter labelings shown in Figure \ref{fig-tetrahedron} instead of integer indices.
By following the definition of $U$, the evolution of base elements on the 2-simplex ${\bf x}\in \{ {\bf a},{\bf b},{\bf c}, {\bf A},{\bf B},{\bf C},{\bf D} \}$ is given in Subsections \ref{section-around-tetra} and \ref{section-on-tetra}.

\begin{figure}[htbp]\em
\begin{minipage}{1\hsize}
\centering
\includegraphics[width=12.0cm]{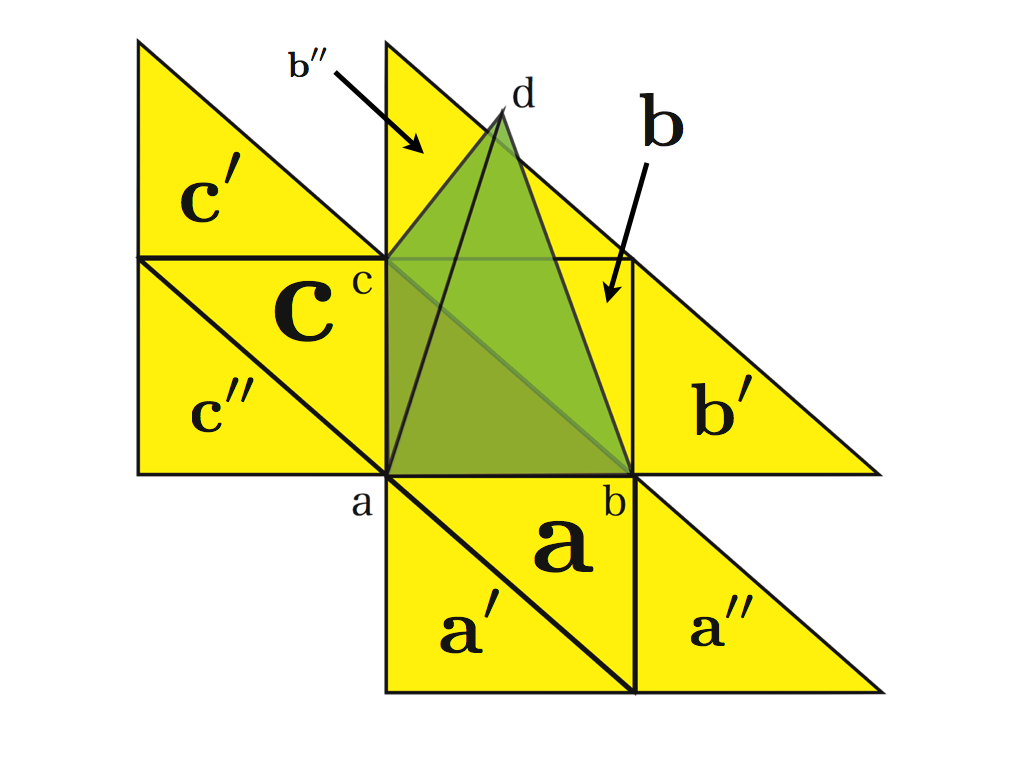}
\end{minipage}
\caption{Labeling of simplices around tetrahedron in $\mathcal{K}_2$}
\label{fig-tetrahedron}
A tetrahedron is put on the infinite cylinder $\mathcal{K}_1$. In this figure we put a tetrahedron, colored by green, on ${\bf A}\equiv |abc|$. As stated in Figure \ref{fig-state}-(a) the base element on ${\bf A}$ is determined by
\begin{equation*}
\delta_{{\bf A},0} = \delta^{(2)}_{[abc]},\ \delta_{{\bf A},1} = \delta^{(2)}_{[bca]},\ \delta_{{\bf A},2} = \delta^{(2)}_{[cab]},\ \delta_{{\bf A},3} = \delta^{(2)}_{[acb]},\ \delta_{{\bf A},4} = \delta^{(2)}_{[cba]},\ \delta_{{\bf A},5} = \delta^{(2)}_{[bac]}.
\end{equation*}
Simplices on the tetrahedron are labeled by ${\bf B}\equiv |abd|$, ${\bf C}\equiv |bcd|$ and ${\bf D}\equiv |cad|$. See also Figure \ref{fig-state-tetrahedron}. Simplices which have effects from the tetrahedron are labeled by {\bf a}, {\bf b} and  {\bf c}.
\end{figure}

\begin{figure}[htbp]\em
\begin{minipage}{1\hsize}
\centering
\includegraphics[width=12.0cm]{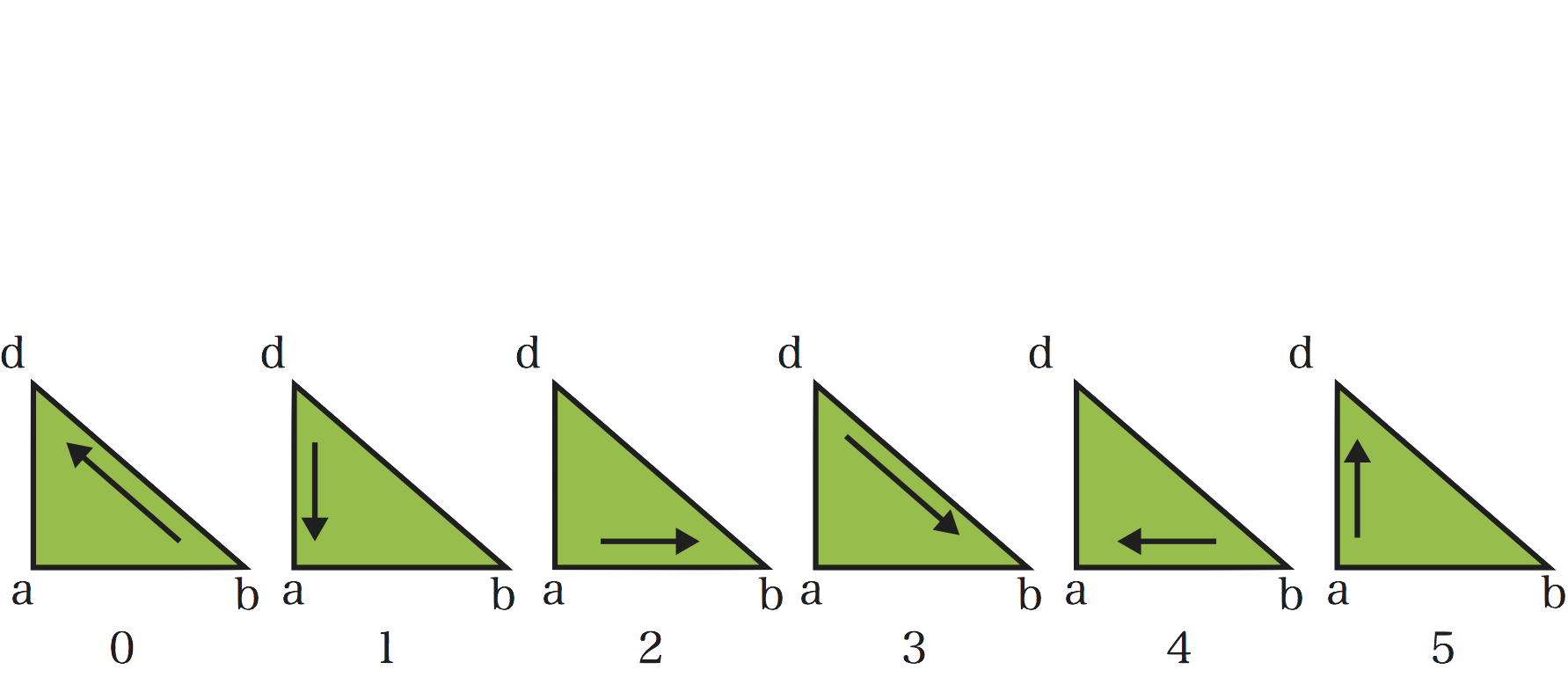}
(a)
\end{minipage}
\begin{minipage}{1\hsize}
\centering
\includegraphics[width=12.0cm]{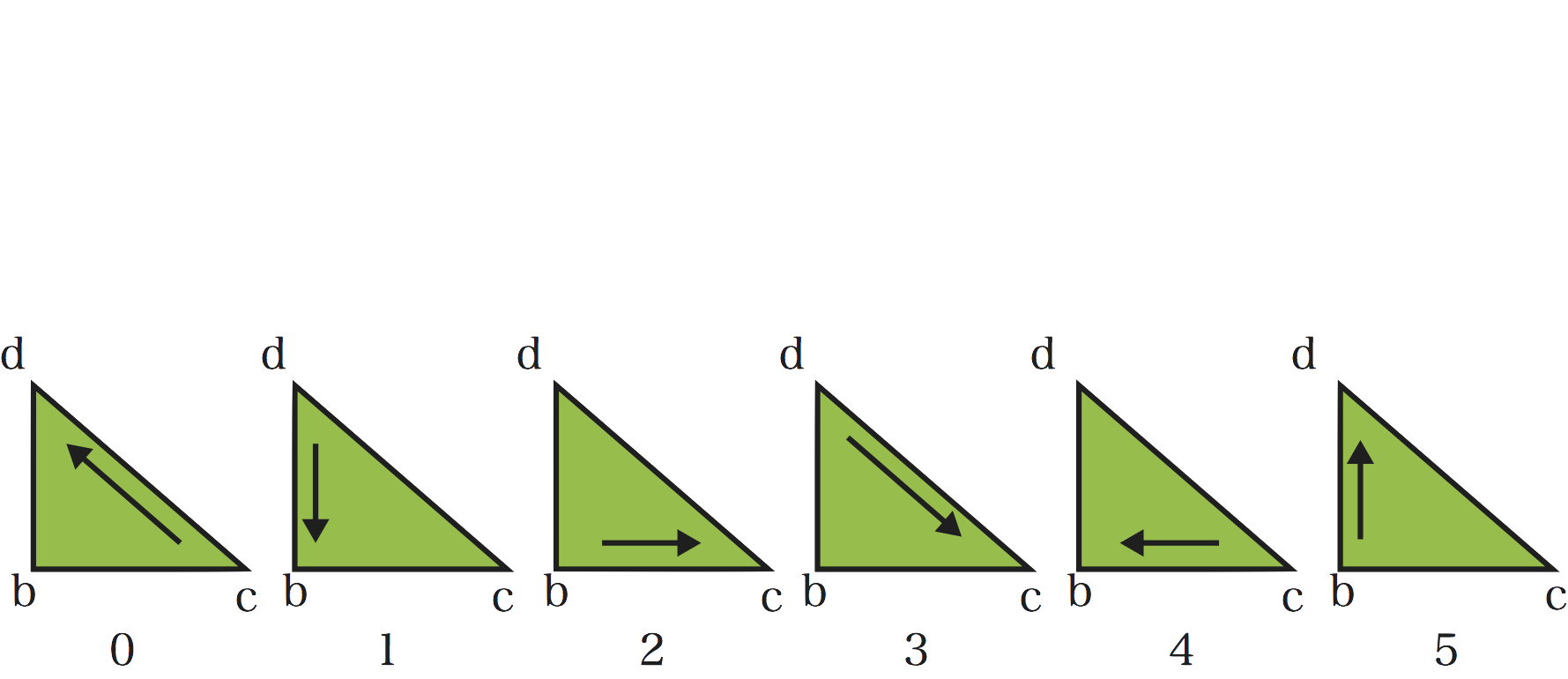}
(b)
\end{minipage}
\begin{minipage}{1\hsize}
\centering
\includegraphics[width=12.0cm]{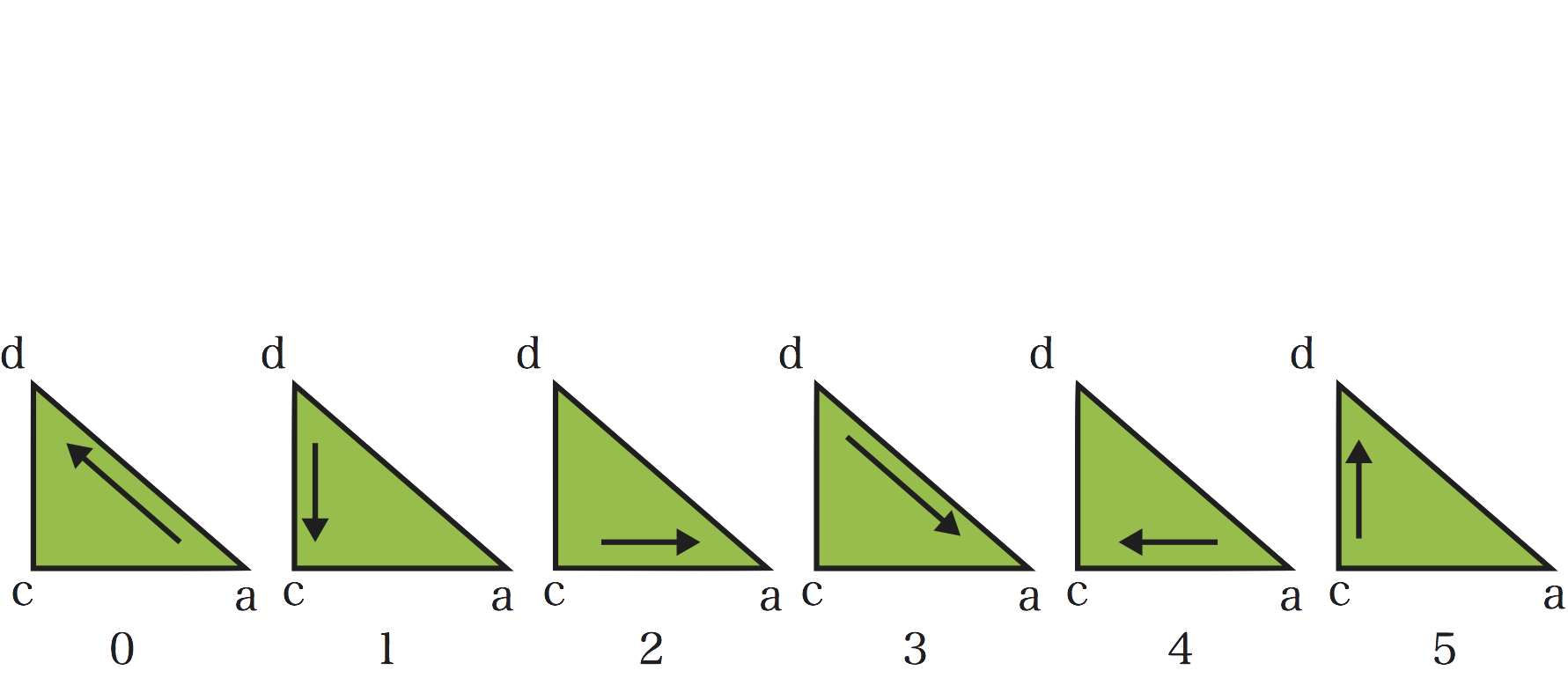}
(c)
\end{minipage}
\caption{Labeling of bases on the tetrahedron}
\label{fig-state-tetrahedron}
Figures here denote the graphical representations of the following bases for $l\in \{0,1,2,3,4,5\}$. Labelings of  {\bf B}, {\bf C} and {\bf D} are followed by Figure \ref{fig-tetrahedron}.
(a): $\delta_{{\bf B}, l}$. (b): $\delta_{{\bf C}, l}$. (c): $\delta_{{\bf D}, l}$. 
\end{figure}

\subsection{$2$-simplices around tetrahedron}
\label{section-around-tetra}
\begin{description}
\item[Case 1: $2$-simplex labeled {\bf a} (Figure \ref{fig-tetrahedron} and \ref{fig-state-tetrahedron})] 
\end{description}
\begin{align*}
\begin{pmatrix}
\delta_{{\bf a},0}\\
\delta_{{\bf a},1}\\
\delta_{{\bf a},2}\\
\delta_{{\bf a},3}\\
\delta_{{\bf a},4}\\
\delta_{{\bf a},5}
\end{pmatrix}
&\mapsto
\begin{pmatrix}
(2|w_{{\bf a},0}|^2-1)\cdot \delta_{{\bf a},1} \\
(2|w_{{\bf a},1}|^2-1)\cdot \delta_{{\bf a},2} \\
(2|w_{{\bf a},2}|^2-1)\cdot \delta_{{\bf a},0} \\
(2|w_{{\bf a},3}|^2-1)\cdot \delta_{{\bf a},4} \\
(2|w_{{\bf a},4}|^2-1)\cdot \delta_{{\bf a},5} \\
(2|w_{{\bf a},5}|^2-1)\cdot \delta_{{\bf a},3}
\end{pmatrix}
+
\begin{pmatrix}
2\overline{w_{{\bf a},0}}w_{{\bf A},4} \cdot \delta_{{\bf A},5}\\
2\overline{w_{{\bf a},1}}w_{{\bf a}',3} \cdot \delta_{{\bf a}',4}\\
2\overline{w_{{\bf a},2}}w_{{\bf a}'',5} \cdot \delta_{{\bf a}'',3}\\
2\overline{w_{{\bf a},3}}w_{{\bf A},2} \cdot \delta_{{\bf A},0}\\
2\overline{w_{{\bf a},4}}w_{{\bf a}'',1} \cdot \delta_{{\bf a}'',2}\\
2\overline{w_{{\bf a},5}}w_{{\bf a}',0} \cdot \delta_{{\bf a}',1}
\end{pmatrix}
+
\begin{pmatrix}
2\overline{w_{{\bf a},0}}w_{{\bf B},4} \cdot \delta_{{\bf B},5}\\
0\\
0\\
2\overline{w_{{\bf a},3}}w_{{\bf B},2} \cdot \delta_{{\bf B},0}\\
0\\
0
\end{pmatrix}.
\end{align*}

\begin{description}
\item[Case 2: $2$-simplex labeled {\bf b} (Figure \ref{fig-tetrahedron} and  \ref{fig-state-tetrahedron})] 
\end{description}
\begin{align*}
\begin{pmatrix}
\delta_{{\bf b},0}\\
\delta_{{\bf b},1}\\
\delta_{{\bf b},2}\\
\delta_{{\bf b},3}\\
\delta_{{\bf b},4}\\
\delta_{{\bf b},5}
\end{pmatrix}
\mapsto
\begin{pmatrix}
(2|w_{{\bf b},0}|^2-1)\cdot \delta_{{\bf b},1} \\
(2|w_{{\bf b},1}|^2-1)\cdot \delta_{{\bf b},2} \\
(2|w_{{\bf b},2}|^2-1)\cdot \delta_{{\bf b},0} \\
(2|w_{{\bf b},3}|^2-1)\cdot \delta_{{\bf b},4} \\
(2|w_{{\bf b},4}|^2-1)\cdot \delta_{{\bf b},5} \\
(2|w_{{\bf b},5}|^2-1)\cdot \delta_{{\bf b},3}
\end{pmatrix}
+
\begin{pmatrix}
2\overline{w_{{\bf b},0}}w_{{\bf b}'',4} \cdot \delta_{{\bf b}'',5}\\
2\overline{w_{{\bf b},1}}w_{{\bf A},3} \cdot \delta_{{\bf A},4}\\
2\overline{w_{{\bf b},2}}w_{{\bf b}',5} \cdot \delta_{{\bf b}',3}\\
2\overline{w_{{\bf b},3}}w_{{\bf b}'',2} \cdot \delta_{{\bf b}'',0}\\
2\overline{w_{{\bf b},4}}w_{{\bf b}',1} \cdot \delta_{{\bf b}',2}\\
2\overline{w_{{\bf b},5}}w_{{\bf A},0} \cdot \delta_{{\bf A},1}
\end{pmatrix}
+
\begin{pmatrix}
0\\
2\overline{w_{{\bf b},1}}w_{{\bf C},4} \cdot \delta_{{\bf C},5}\\
0\\
0\\
0\\
2\overline{w_{{\bf b},5}}w_{{\bf C},2} \cdot \delta_{{\bf C},0}
\end{pmatrix}.
\end{align*}

\begin{description}
\item[Case 3: $2$-simplex labeled {\bf c} (Figure \ref{fig-tetrahedron} and  \ref{fig-state-tetrahedron})] 
\end{description}
\begin{align*}
\begin{pmatrix}
\delta_{{\bf c},0}\\
\delta_{{\bf c},1}\\
\delta_{{\bf c},2}\\
\delta_{{\bf c},3}\\
\delta_{{\bf c},4}\\
\delta_{{\bf c},5}
\end{pmatrix}
&\mapsto
\begin{pmatrix}
(2|w_{{\bf c},0}|^2-1)\cdot \delta_{{\bf c},1} \\
(2|w_{{\bf c},1}|^2-1)\cdot \delta_{{\bf c},2} \\
(2|w_{{\bf c},2}|^2-1)\cdot \delta_{{\bf c},0} \\
(2|w_{{\bf c},3}|^2-1)\cdot \delta_{{\bf c},4} \\
(2|w_{{\bf c},4}|^2-1)\cdot \delta_{{\bf c},5} \\
(2|w_{{\bf c},5}|^2-1)\cdot \delta_{{\bf c},3}
\end{pmatrix}
+
\begin{pmatrix}
2\overline{w_{{\bf c},0}}w_{{\bf c}',4} \cdot \delta_{{\bf c}',5}\\
2\overline{w_{{\bf c},1}}w_{{\bf c}'',3} \cdot \delta_{{\bf c}'',4}\\
2\overline{w_{{\bf c},2}}w_{{\bf A},5} \cdot \delta_{{\bf A},3}\\
2\overline{w_{{\bf c},3}}w_{{\bf c}',2} \cdot \delta_{{\bf c}',0}\\
2\overline{w_{{\bf c},4}}w_{{\bf A},1} \cdot \delta_{{\bf A},2}\\
2\overline{w_{{\bf c},5}}w_{{\bf c}'',0} \cdot \delta_{{\bf c}'',1}
\end{pmatrix}
+
\begin{pmatrix}
0\\
0\\
2\overline{w_{{\bf c},2}}w_{{\bf D},4} \cdot \delta_{{\bf D},5}\\
0\\
2\overline{w_{{\bf c},4}}w_{{\bf D},2} \cdot \delta_{{\bf D},0}\\
0
\end{pmatrix}.
\end{align*}

\subsection{$2$-simplices on tetrahedron}
\label{section-on-tetra}
\begin{description}
\item[Case 1: $2$-simplex labeled {\bf A} (Figure \ref{fig-tetrahedron} and  \ref{fig-state-tetrahedron})] 
\end{description}
\begin{align*}
\begin{pmatrix}
\delta_{{\bf A},0}\\
\delta_{{\bf A},1}\\
\delta_{{\bf A},2}\\
\delta_{{\bf A},3}\\
\delta_{{\bf A},4}\\
\delta_{{\bf A},5}
\end{pmatrix}
&\mapsto
\begin{pmatrix}
(2|w_{{\bf A},0}|^2-1)\cdot \delta_{{\bf A},1} \\
(2|w_{{\bf A},1}|^2-1)\cdot \delta_{{\bf A},2} \\
(2|w_{{\bf A},2}|^2-1)\cdot \delta_{{\bf A},0} \\
(2|w_{{\bf A},3}|^2-1)\cdot \delta_{{\bf A},4} \\
(2|w_{{\bf A},4}|^2-1)\cdot \delta_{{\bf A},5} \\
(2|w_{{\bf A},5}|^2-1)\cdot \delta_{{\bf A},3}
\end{pmatrix}
 +
\begin{pmatrix}
2\overline{w_{{\bf A},0}}w_{{\bf b},5} \cdot \delta_{{\bf b},3}\\
2\overline{w_{{\bf A},1}}w_{{\bf c},4} \cdot \delta_{{\bf c},5}\\
2\overline{w_{{\bf A},2}}w_{{\bf a},3} \cdot \delta_{{\bf a},4}\\
2\overline{w_{{\bf A},3}}w_{{\bf b},1} \cdot \delta_{{\bf b},2}\\
2\overline{w_{{\bf A},4}}w_{{\bf a},0} \cdot \delta_{{\bf a},1}\\
2\overline{w_{{\bf A},5}}w_{{\bf c},2} \cdot \delta_{{\bf c},0}
\end{pmatrix}
+
\begin{pmatrix}
2\overline{w_{{\bf A},0}}w_{{\bf C},2} \cdot \delta_{{\bf C},0}\\
2\overline{w_{{\bf A},1}}w_{{\bf D},2} \cdot \delta_{{\bf D},0}\\
2\overline{w_{{\bf A},2}}w_{{\bf B},2} \cdot \delta_{{\bf B},0}\\
2\overline{w_{{\bf A},3}}w_{{\bf C},4} \cdot \delta_{{\bf C},5}\\
2\overline{w_{{\bf A},4}}w_{{\bf B},4} \cdot \delta_{{\bf B},5}\\
2\overline{w_{{\bf A},5}}w_{{\bf D},4} \cdot \delta_{{\bf D},5}
\end{pmatrix}.
\end{align*}

\begin{description}
\item[Case 2: $2$-simplex labeled {\bf B} (Figure \ref{fig-tetrahedron} and  \ref{fig-state-tetrahedron})] 
\end{description}
\begin{align*}
\begin{pmatrix}
\delta_{{\bf B},0}\\
\delta_{{\bf B},1}\\
\delta_{{\bf B},2}\\
\delta_{{\bf B},3}\\
\delta_{{\bf B},4}\\
\delta_{{\bf B},5}
\end{pmatrix}
&\mapsto
\begin{pmatrix}
(2|w_{{\bf B},0}|^2-1)\cdot \delta_{{\bf B},1} \\
(2|w_{{\bf B},1}|^2-1)\cdot \delta_{{\bf B},2} \\
(2|w_{{\bf B},2}|^2-1)\cdot \delta_{{\bf B},0} \\
(2|w_{{\bf B},3}|^2-1)\cdot \delta_{{\bf B},4} \\
(2|w_{{\bf B},4}|^2-1)\cdot \delta_{{\bf B},5} \\
(2|w_{{\bf B},5}|^2-1)\cdot \delta_{{\bf B},3}
\end{pmatrix}
 +
\begin{pmatrix}
0\\
0\\
2\overline{w_{{\bf B},2}}w_{{\bf a},6} \cdot \delta_{{\bf a},4} \\
0\\
2\overline{w_{{\bf B},4}}w_{{\bf a},0} \cdot \delta_{{\bf a},1} \\
0
\end{pmatrix}
+
\begin{pmatrix}
2\overline{w_{{\bf B},0}}w_{{\bf C},5} \cdot \delta_{{\bf C},3}\\
2\overline{w_{{\bf B},1}}w_{{\bf D},3} \cdot \delta_{{\bf D},4}\\
2\overline{w_{{\bf B},2}}w_{{\bf A},2} \cdot \delta_{{\bf A},0}\\
2\overline{w_{{\bf B},3}}w_{{\bf C},1} \cdot \delta_{{\bf C},2}\\
2\overline{w_{{\bf B},4}}w_{{\bf A},4} \cdot \delta_{{\bf A},5}\\
2\overline{w_{{\bf B},5}}w_{{\bf D},0} \cdot \delta_{{\bf D},1}
\end{pmatrix}.
\end{align*}

\begin{description}
\item[Case 3: $2$-simplex labeled {\bf C} (Figure \ref{fig-tetrahedron} and  \ref{fig-state-tetrahedron})] 
\end{description}
\begin{align*}
\begin{pmatrix}
\delta_{{\bf C},0}\\
\delta_{{\bf C},1}\\
\delta_{{\bf C},2}\\
\delta_{{\bf C},3}\\
\delta_{{\bf C},4}\\
\delta_{{\bf C},5}
\end{pmatrix}
&\mapsto
\begin{pmatrix}
(2|w_{{\bf C},0}|^2-1)\cdot \delta_{{\bf C},1} \\
(2|w_{{\bf C},1}|^2-1)\cdot \delta_{{\bf C},2} \\
(2|w_{{\bf C},2}|^2-1)\cdot \delta_{{\bf C},0} \\
(2|w_{{\bf C},3}|^2-1)\cdot \delta_{{\bf C},4} \\
(2|w_{{\bf C},4}|^2-1)\cdot \delta_{{\bf C},5} \\
(2|w_{{\bf C},5}|^2-1)\cdot \delta_{{\bf C},3}
\end{pmatrix}
 +
\begin{pmatrix}
0\\
0\\
2\overline{w_{{\bf C},2}}w_{{\bf b},5} \cdot \delta_{{\bf b},3} \\
0\\
2\overline{w_{{\bf C},4}}w_{{\bf b},1} \cdot \delta_{{\bf b},2} \\
0
\end{pmatrix}
+
\begin{pmatrix}
2\overline{w_{{\bf C},0}}w_{{\bf D},5} \cdot \delta_{{\bf D},3}\\
2\overline{w_{{\bf C},1}}w_{{\bf B},3} \cdot \delta_{{\bf B},4}\\
2\overline{w_{{\bf C},2}}w_{{\bf A},0} \cdot \delta_{{\bf A},1}\\
2\overline{w_{{\bf C},3}}w_{{\bf D},1} \cdot \delta_{{\bf D},2}\\
2\overline{w_{{\bf C},4}}w_{{\bf A},3} \cdot \delta_{{\bf A},4}\\
2\overline{w_{{\bf C},5}}w_{{\bf B},0} \cdot \delta_{{\bf B},1}
\end{pmatrix}.
\end{align*}

\begin{description}
\item[Case 4: $2$-simplex labeled {\bf D} (Figure \ref{fig-tetrahedron} and  \ref{fig-state-tetrahedron})] 
\end{description}
\begin{align*}
\begin{pmatrix}
\delta_{{\bf D},0}\\
\delta_{{\bf D},1}\\
\delta_{{\bf D},2}\\
\delta_{{\bf D},3}\\
\delta_{{\bf D},4}\\
\delta_{{\bf D},5}
\end{pmatrix}
&\mapsto
\begin{pmatrix}
(2|w_{{\bf D},0}|^2-1)\cdot \delta_{{\bf D},1} \\
(2|w_{{\bf D},1}|^2-1)\cdot \delta_{{\bf D},2} \\
(2|w_{{\bf D},2}|^2-1)\cdot \delta_{{\bf D},0} \\
(2|w_{{\bf D},3}|^2-1)\cdot \delta_{{\bf D},4} \\
(2|w_{{\bf D},4}|^2-1)\cdot \delta_{{\bf D},5} \\
(2|w_{{\bf D},5}|^2-1)\cdot \delta_{{\bf D},3}
\end{pmatrix}
 +
\begin{pmatrix}
0\\
0\\
2\overline{w_{{\bf D},2}}w_{{\bf c},4} \cdot \delta_{{\bf c},5} \\
0\\
2\overline{w_{{\bf D},2}}w_{{\bf c},2} \cdot \delta_{{\bf c},0} \\
0
\end{pmatrix}
+
\begin{pmatrix}
2\overline{w_{{\bf D},0}}w_{{\bf B},5} \cdot \delta_{{\bf B},3}\\
2\overline{w_{{\bf D},1}}w_{{\bf C},3} \cdot \delta_{{\bf C},4}\\
2\overline{w_{{\bf D},2}}w_{{\bf A},1} \cdot \delta_{{\bf A},2}\\
2\overline{w_{{\bf D},3}}w_{{\bf B},1} \cdot \delta_{{\bf B},2}\\
2\overline{w_{{\bf D},2}}w_{{\bf A},5} \cdot \delta_{{\bf A},3}\\
2\overline{w_{{\bf D},5}}w_{{\bf C},0} \cdot \delta_{{\bf C},1}
\end{pmatrix}.
\end{align*}

\section{Figures: behavior of S-quantum walks on simplicial complexes}

\begin{figure}[htbp]\em
\begin{minipage}{0.5\hsize}
\centering
\includegraphics[width=8.0cm]{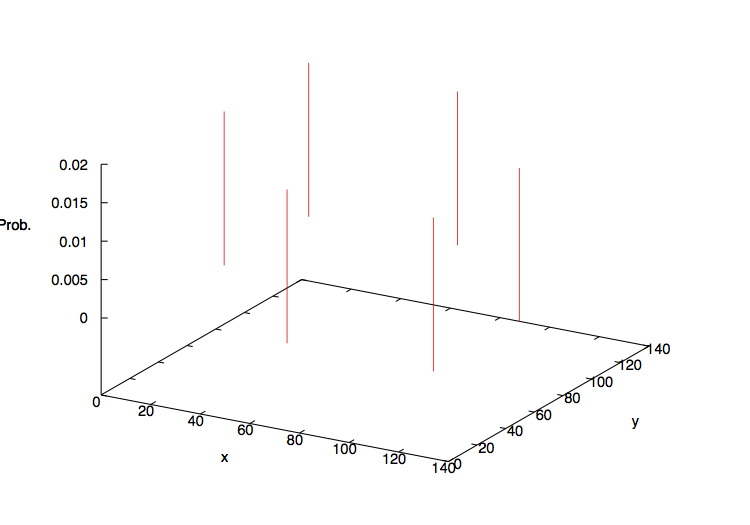}
(a)
\end{minipage}
\begin{minipage}{0.5\hsize}
\centering
\includegraphics[width=8.0cm]{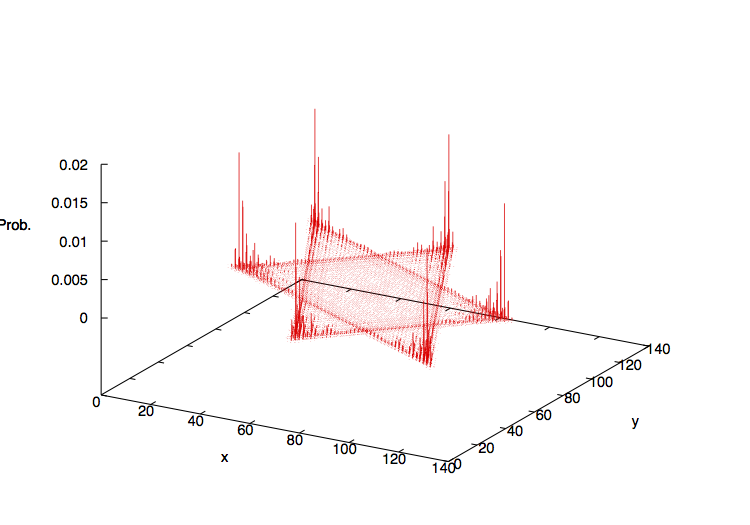}
(b)
\end{minipage}
\begin{minipage}{0.5\hsize}
\centering
\includegraphics[width=8.0cm]{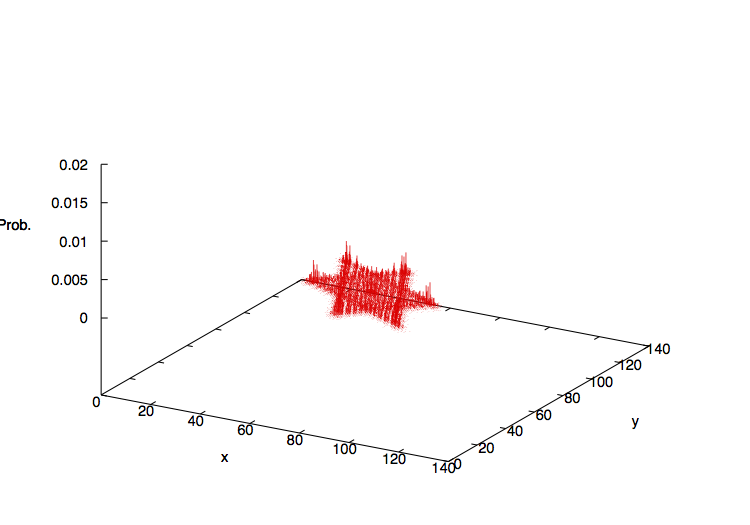}
(c)
\end{minipage}
\caption{S-quantum walk on $\mathcal{K}_0$ with differential weights : The Hexagram}
\label{fig-Swalk-R2}

\bigskip
(a). Probability density distribution after $120$ steps of the S-quantum walk with the weight (\ref{weight-R2-1}). Initial state at $|abc|$, $a=(0,0)\in \mathbb{R}^2$, $b=(1,0)\in \mathbb{R}^2$, $c=(0,1)\in \mathbb{R}^2$ moves in monotone directions. This is a typical example of the non-interactive quantum walk.

\bigskip
(b). Probability density distribution after $120$ steps of the S-quantum walk with the weight (\ref{weight-R2-2}). Initial state at $|abc|$ draws the hexagram. 

\bigskip
(c). Probability density distribution after $120$ steps of the S-quantum walk with the weight (\ref{weight-R2-3}). Initial state at $|abc|$ draws the hexagram but the speed of the spread is slower than the case (b). 

\bigskip
In all cases localization cannot be observed, which is the big difference from the quantum walk on crystal lattices \cite{HKSS2}.
\end{figure}

\begin{figure}[htbp]\em
\begin{minipage}{0.5\hsize}
\centering
\includegraphics[width=8.0cm]{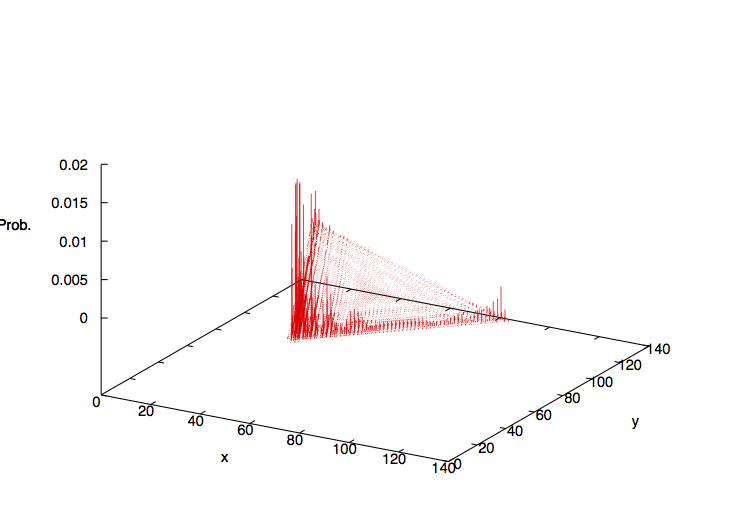}
(a)
\end{minipage}
\begin{minipage}{0,5\hsize}
\centering
\includegraphics[width=8.0cm]{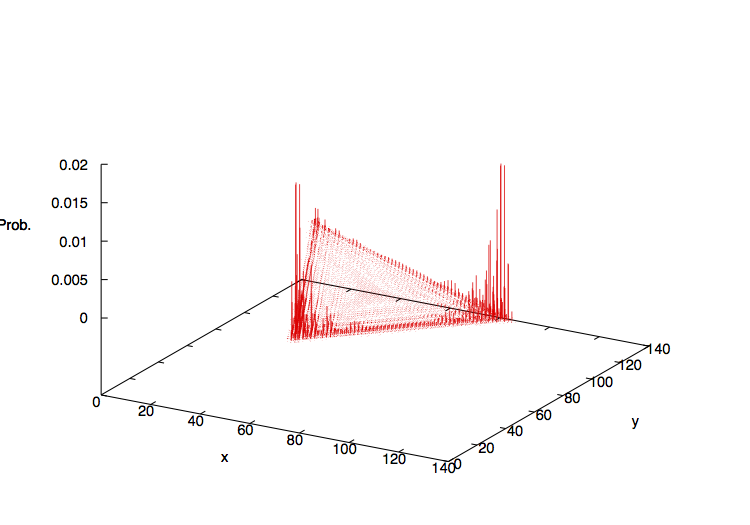}
(b)
\end{minipage}
\begin{minipage}{0,5\hsize}
\centering
\includegraphics[width=8.0cm]{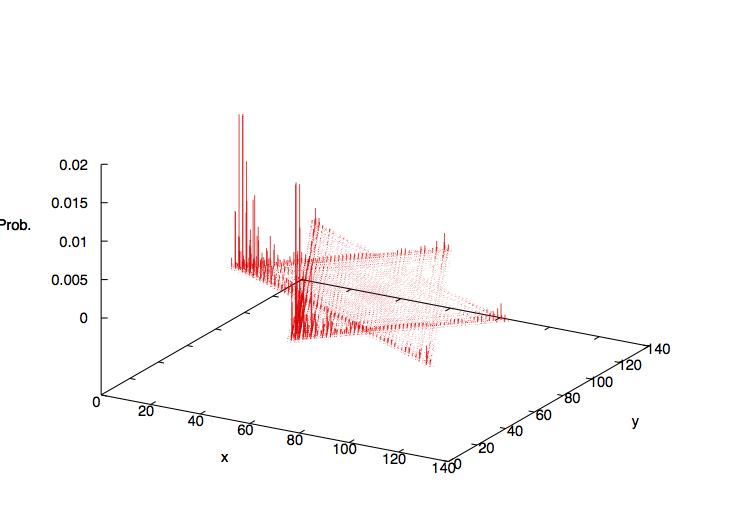}
(c)
\end{minipage}
\caption{S-quantum walk on $\mathcal{K}_0$ with different initial states : The Triangle}
\label{fig-Swalk-1state}
Probability density distribution after $120$ steps of the S-quantum walk with the weight (\ref{weight-R2-2}) and the following initial states.

\bigskip
(a). The initial state is $\Psi_0 = \delta^{(2)}_{[abc]}$, where $a=(0,0)\in \mathbb{R}^2$, $b=(1,0)\in \mathbb{R}^2$, $c=(0,1)\in \mathbb{R}^2$. The S-quantum walk monotone spreads drawing triangle with high probability in one direction.

\bigskip
(b). The initial state is $\Psi_0 = \frac{1}{\sqrt{2}}\delta^{(2)}_{[abc]} + \frac{1}{\sqrt{2}}\delta^{(2)}_{[bca]}$. The S-quantum walk spreads drawing triangle with high probability in two directions.

\bigskip
(c). The initial state is $\Psi_0 = \frac{1}{\sqrt{2}}\delta^{(2)}_{[abc]} + \frac{1}{\sqrt{2}}\delta^{(2)}_{[acb]}$. The S-quantum walk spreads drawing hexagram with high probability in two directions. 
\end{figure}

\begin{figure}[htbp]\em
\begin{minipage}{0.5\hsize}
\centering
\includegraphics[width=7.0cm]{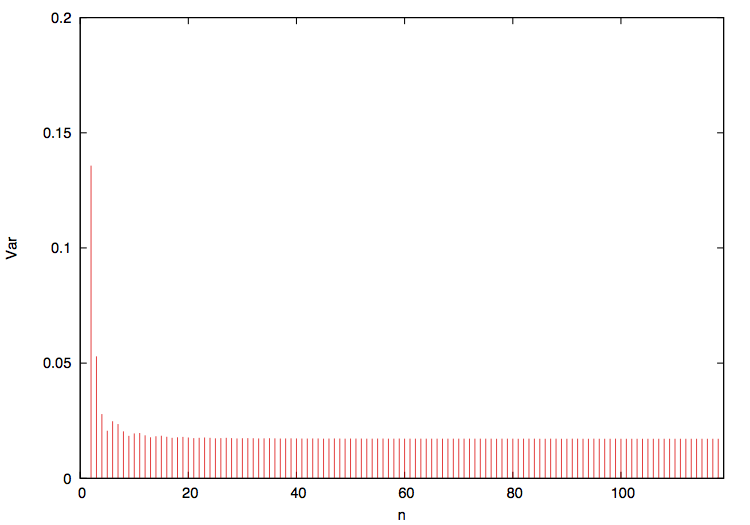}
(a)
\end{minipage}
\begin{minipage}{0.5\hsize}
\centering
\includegraphics[width=7.0cm]{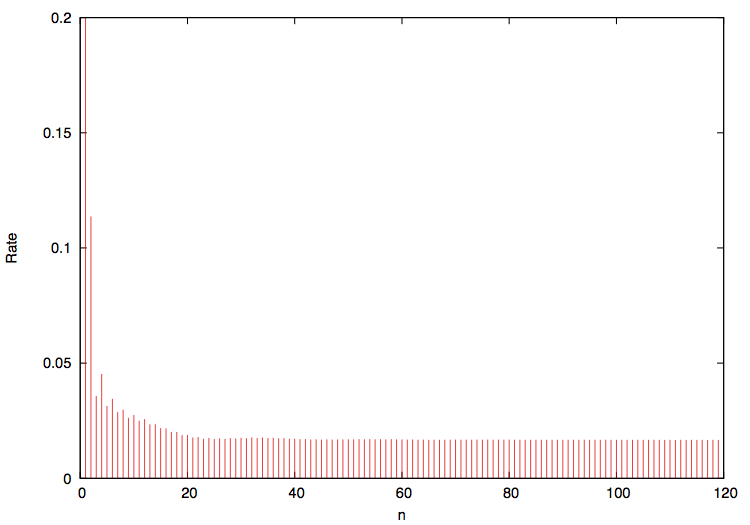}
(b)
\end{minipage}
\begin{minipage}{0.5\hsize}
\centering
\includegraphics[width=7.0cm]{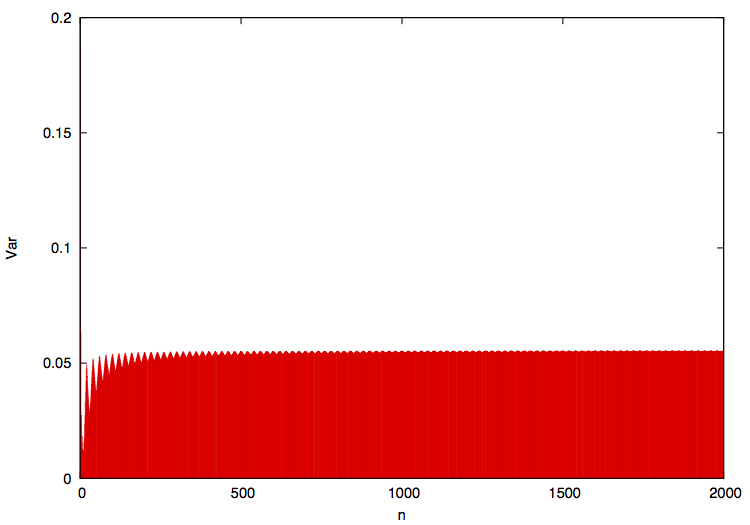}
(c)
\end{minipage}
\begin{minipage}{0.5\hsize}
\centering
\includegraphics[width=7.0cm]{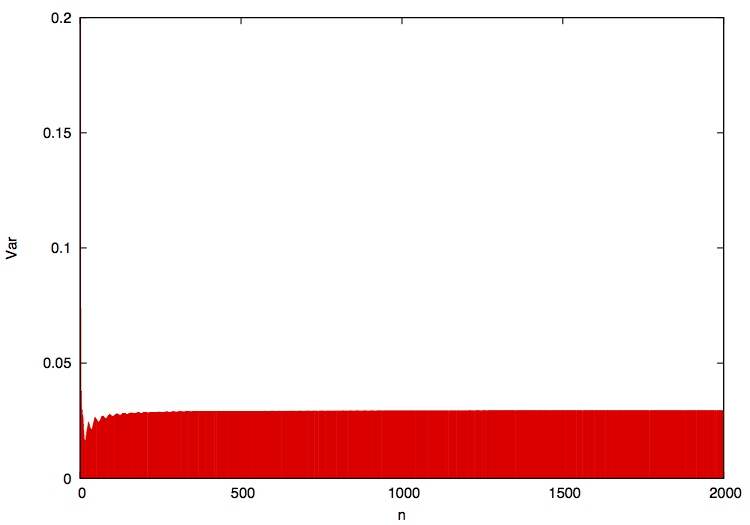}
(d)
\end{minipage}
\caption{The graph of variance $V_n/n^2$ in (\ref{variance-complex}): asymptotic behavior of the second moment of $\Psi_n/n$ with respect to the S-quantum walks $\Psi_n$. Row: time step $n$. Column: the value of $V_n/n^2$.}
\label{fig-spread}

\bigskip
(a). Time evolution of $V_n/n^2$ for the Hadamard walk on $\mathbb{Z}^2$ (e.g. \cite{K1}). The value of $V_n/n^2$ converges to $\approx 0.01719$.

\bigskip
(b). Time evolution of $V_n/n^2$ for the S-quantum walk on $\mathcal{K}_0$ with the weight (\ref{weight-R2-2}). The value of $V_n/n^2$ converges to $\approx 0.01671$.

\bigskip
(c). Time evolution of $V_n/n^2$ for the S-quantum walk on $\mathcal{K}_1$ with the weight (\ref{weight-R2-1}). The value of $V_n/n^2$ converges to $\approx 0.05545$.

\bigskip
(d). Time evolution of $V_n/n^2$ for the S-quantum walk on $\mathcal{K}_1$ with the weight (\ref{weight-R2-2}). The value of $V_n/n^2$ converges to $\approx 0.02941$.

\end{figure}

\begin{figure}[htbp]\em
\begin{minipage}{0.5\hsize}
\centering
\includegraphics[width=8.0cm]{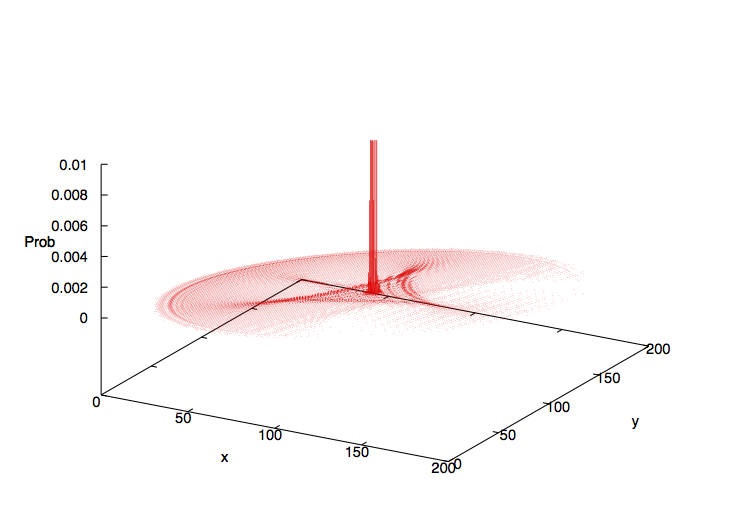}
(a)
\end{minipage}
\begin{minipage}{0.5\hsize}
\centering
\includegraphics[width=8.0cm]{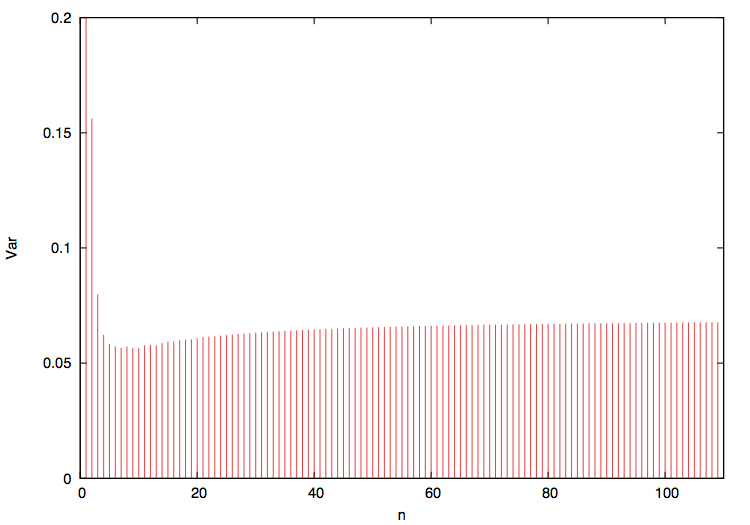}
(b)
\end{minipage}
\caption{The Grover walk $\Psi_n$ on the triangular lattice}
\label{fig-QW-triangle}

\bigskip
(a): Localization at the origin.\\
(b): Convergence of variance of the quantum walk $\Psi_n/n$. Row: time. Column: the value of the variance $V_n/n^2$ at the time $n$.
\end{figure}

\begin{figure}[htbp]\em
\begin{minipage}{0.5\hsize}
\centering
\includegraphics[width=8.0cm]{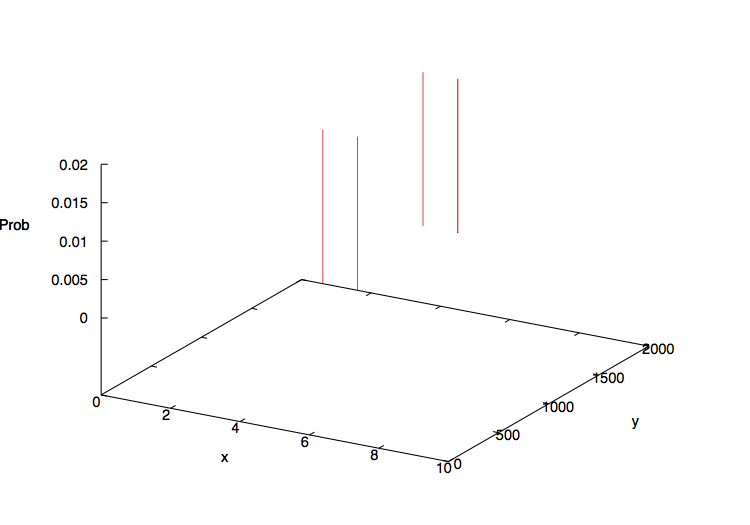}
(a)
\end{minipage}
\begin{minipage}{0.5\hsize}
\centering
\includegraphics[width=8.0cm]{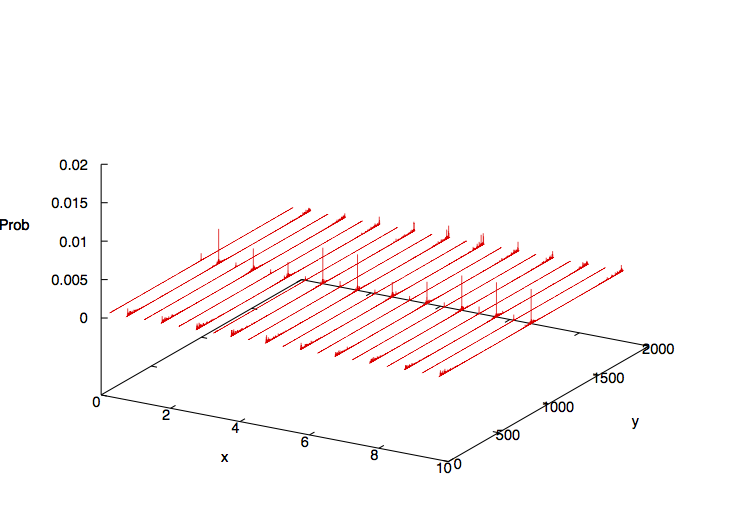}
(b)
\end{minipage}
\begin{minipage}{0.5\hsize}
\centering
\includegraphics[width=8.0cm]{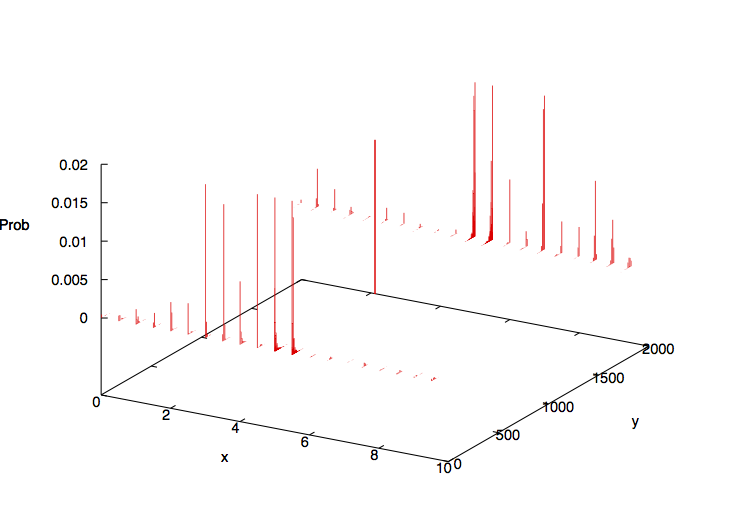}
(c)
\end{minipage}
\caption{S-quantum walk on $\mathcal{K}_1$ : (a), (b),  and on $\mathcal{K}_2$ : (c)}
\label{fig-Swalk-cylinder}
Figures (a) and (b) depict probability after 2000 steps of the S-quantum walk on $\mathcal{K}_1$ with the weights (\ref{weight-R2-1}) and (\ref{weight-R2-2}), respectively.
Figure (c) depict probability after 2000 steps of the S-quantum walk on $\mathcal{K}_2$ with the weight  (\ref{weight-Grover}). 
In all cases the initial state $\Psi_0$ is given by (\ref{initial-cylinder}) (see also Figure \ref{fig-tetrahedron}). 

\bigskip
(a). Initial state $\Psi_0$ moves in two infinite directions as well as staying on the cycle where the starting simplex is located with a positive probability. See also the graph \lq\lq cylinder-transmit" in Figure \ref{fig-time-average}.
This is an example of the non-interactive quantum walk.

\bigskip
(b). Initial state $\Psi_0$ moves in two infinite directions as well as staying on the cycle where the starting simplex is located with a positive probability, as in the case of (a). See also the graph \lq\lq cylinder-hetero" in Figure \ref{fig-time-average}.

\bigskip
(c). Localization occurs on the tetrahedron, while localization on the circle observed in the case of $\mathcal{K}_1$ is absorbed by the tetrahedron. See also the graphs  \lq\lq tetra-on" and \lq\lq tetra-off" in Figure \ref{fig-time-average}.

\end{figure}

\begin{figure}[htbp]\em
\begin{minipage}{0.5\hsize}
\centering
\includegraphics[width=8.0cm]{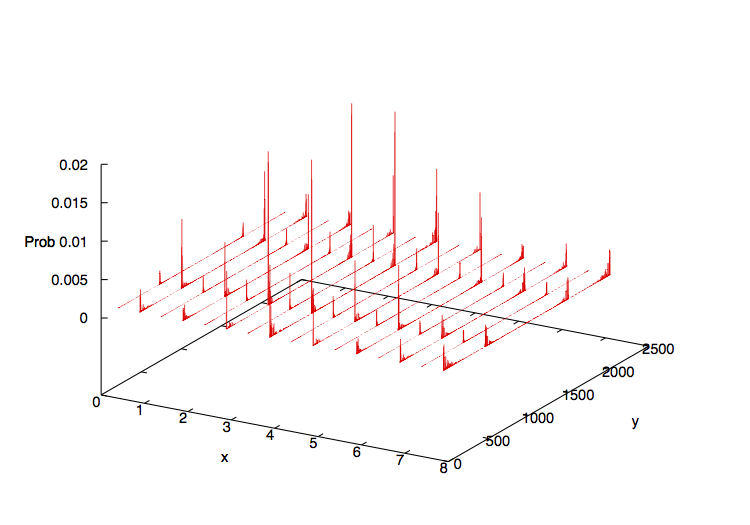}
(a)
\end{minipage}
\begin{minipage}{0.5\hsize}
\centering
\includegraphics[width=8.0cm]{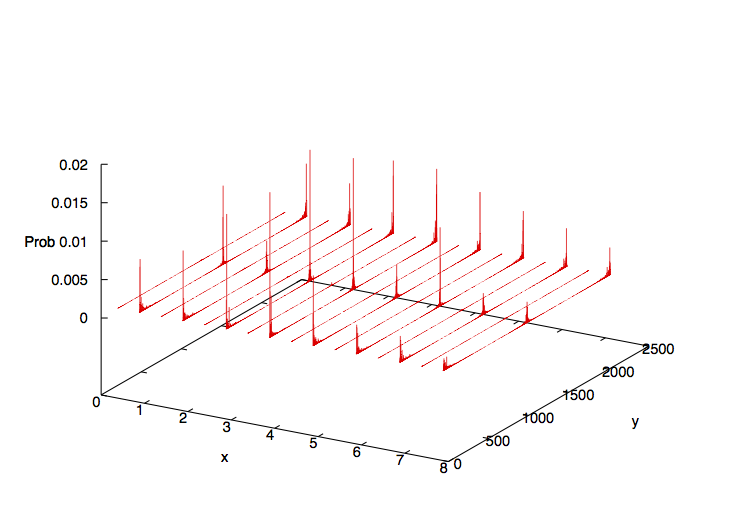}
(b)
\end{minipage}
\caption{S-quantum walks on $\mathcal{K}_3$ : (a), versus on $\mathcal{K}_1$ : (b)}
\label{fig-Mobius}

\bigskip
(a). Probability density distribution after $2100$ steps of the S-quantum walk on $\mathcal{K}_3$ with the weight (\ref{weight-Mobius}). Although the initial state exists only on the center of $\mathcal{K}_3$, quantum walker slides away from the center in $y$-coordinate. This implies that no localization can occur on $\mathcal{K}_3$, which reflects the geometry of $\mathcal{K}_3$ that the M\"{o}bius band is non-orientable.

\bigskip
(b). Probability density distribution after $2100$ steps of the S-quantum walk on $\mathcal{K}_1$ with the weight (\ref{weight-Mobius}). The initial state is the same as (a). In this case, localization can occur.
\end{figure}

\begin{figure}[htbp]\em
\begin{minipage}{1\hsize}
\centering
\includegraphics[width=10.0cm]{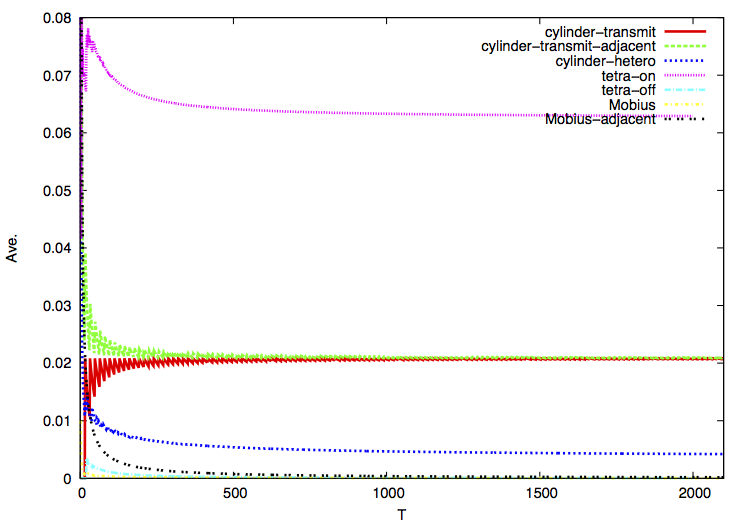}
\end{minipage}
\caption{Time-averaged probability of S-quantum walks on simplices}
\label{fig-time-average}
Each graph denotes the time-averaged probability $\bar \mu_T(|\sigma|)$ at time $T$ and at $|\sigma|\in K_2$. Let $|\sigma_0|$ be the support of the initial state $\Psi_0$ discussed in Section \ref{section-numerical}, labelled by $(i,j,k)=(N/2, N/2, 0)$. 
\begin{itemize}
\item \lq\lq cylinder-transmit" (red) denotes $\bar \mu_T^{(\Psi_0)}(|\sigma_0|)$ for S-quantum walk on $\mathcal{K}_1$ with the weight (\ref{weight-R2-1}).
\item \lq\lq cylinder-transmit-adjacent" (green) denotes $\bar \mu_T^{(\Psi_0)}(|\sigma|)$ at $\sigma \in K_2$ labelled by $(i,j,k)=(N/2-1, N/2, 0)$, which is on the central cycle, for S-quantum walk on $\mathcal{K}_1$ with the weight (\ref{weight-R2-1}). 
\item \lq\lq cylinder-hetero" (blue) denotes $\bar \mu_T^{(\Psi_0)}(|\sigma_0|)$ for S-quantum walk on $\mathcal{K}_1$ with the weight (\ref{weight-R2-2}). 
\item \lq\lq tetra-on" (purple) denotes $\bar \mu_T^{(\Psi_0)}(|\sigma_0|)$ for S-quantum walk on $\mathcal{K}_2$ with the weight (\ref{weight-Grover}). Note that $\sigma$ is a part of the tetrahedron in $\mathcal{K}_2$.
\item \lq\lq tetra-off" (sky blue) denotes $\bar \mu_T^{(\Psi_0)}(|\sigma|)$ at $|\sigma| \in K_2$ labelled by $(i,j,k)=(N/2-1, N/2, 0)$ for S-quantum walk on $\mathcal{K}_2$ with the weight (\ref{weight-Grover}). Note that $\sigma$ is on the central cycle but off the tetrahedron in $\mathcal{K}_2$. Comparing with \lq\lq cylinder-transmit-adjacent", localization is not exhibited in this case.
\item \lq\lq Mobius" (yellow) denotes $\bar \mu_T^{(\Psi_0)}(|\sigma_0|)$ for S-quantum walk on $\mathcal{K}_3$ with the weight (\ref{weight-Mobius}). 
\item \lq\lq Mobius-adjacent" (black) denotes $\bar \mu_T^{(\Psi_0)}(|\sigma|)$ at $|\sigma| \in K_2$ labelled by $(i,j,k)=(N/2, N/2, 1)$ for S-quantum walk on $\mathcal{K}_3$ with the weight (\ref{weight-Mobius}). 
\end{itemize}
In all cases which localization occurs, $\bar \mu_T^{(\Psi_0)}(|\sigma|)$ converges to a positive value as $T\to \infty$, while in other cases $\bar \mu_T^{(\Psi_0)}(|\sigma|)$ tends to zero.
\end{figure}

\end{document}